\DeclareMathOperator{\wt}{wt}
\DeclareMathOperator{\pos}{pos}
\DeclareMathOperator{\he}{ht}
\DeclareMathOperator{\supp}{supp}
\DeclareMathOperator{\var}{var}
\DeclareMathOperator{\rk}{rk}
\providecommand*{\nat}[0]{\ensuremath{\mathbb N}}
\providecommand*{\seq}[3]{\ensuremath{#1_{#2}, \dotsc, #1_{#3}}}
\providecommand*{\abs}[1]{\ensuremath{\lvert #1 \rvert}}
\newtheorem{theorem}{Theorem}
\newtheorem{definition}[theorem]{Definition}
\newtheorem{lemma}[theorem]{Lemma}
\newtheorem{proposition}[theorem]{Proposition}
\newtheorem{example}[theorem]{Example}
\newtheorem{corollary}[theorem]{Corollary}
\newtheorem{remark}[theorem]{Remark}
\newcommand{\Tsigma}{T_{\Sigma}}
\newcommand{\B}{\mathbb{B}}
\newcommand{\SR}{\mathbb{S}}
\newcommand{\cA}{\mathcal{A}}
\newcommand{\cB}{\mathcal{B}}
\newcommand{\N}{\mathbb{N}}
\newcommand{\Z}{\mathbb{Z}}
\providecommand{\sem}[1]{\ensuremath{\llbracket #1 \rrbracket}}
\title{Solving the Weighted HOM-Problem\\With the Help of Unambiguity}
\author{Andreea-Teodora Nász
\institute{Leipzig University\\ Leipzig, Germany}
\institute{Faculty of Mathematics and Computer Science
\\
PO box 100 920, 04009 Leipzig, Germany}
\email{nasz@informatik.uni-leipzig.de} }
\begin{document}
\maketitle

\begin{abstract}
The \emph{HOM-problem}, which asks whether the image of a regular tree language under a tree homomorphism is again regular, is known to be decidable by [Godoy, Gim\'enez, Ramos, 
\`Alvarez: The HOM problem is decidable. STOC (2010)]. 
Research on the weighted version of this problem, however, is still in its infancy since it requires customized investigations.
In this paper we address the weighted HOM-problem and strive to keep the underlying semiring as general as possible. In return, we restrict the input:
We require the tree homomorphism~$h$ to be~\emph{tetris-free}, a condition weaker than injectivity, 
and for the given weighted tree automaton, we propose an ambiguity notion with respect to~$h$. These assumptions suffice to ensure decidability of the thus restricted HOM-problem for all zero-sum free semirings by allowing us to reduce it to the (decidable) unweighted case.
\end{abstract}

\section{Introduction}

Over the past decades, various extensions to the well-known model of finite-state automata have 
been proposed. These acceptors were taken to the next level when their qualitative evaluation was 
generalized to a quantitative one, which led to the concept of~\emph{weighted 
automata}~\cite{schutzenberger1961}. Such devices assign a weight to each input word, 
thus computing so-called~\emph{formal power series}. Weighted automata are 
commonly used to model numerical factors related to the input, such as costs, 
probabilities and consumption of resources or time, and enjoy consistent attention from the research community focused on automata 
theory~\cite{droste2021weighted,salomaa2012automata}. 
The favored algebraic structure for performing weight calculations are 
semirings~\cite{gol99,hebisch1998semirings}, as they are quite general while still being computationally efficient due to their 
distributivity.

Another dimension for generalizing finite-state automata lifts their input to more 
complex data structures such as infinite words~\cite{altinfinitewords,infinitewords}, 
trees~\cite{tataok}, graphs~\cite{graphautomata} and 
pictures~\cite{giammarresi1992recognizable,rosenfeld2014picture}. Particularly,
\emph{finite-state tree automata} were introduced independently 
in~\cite{doner1970tree,thatcher1965generalized,thatcher1968generalized}. The 
so-called~\emph{regular tree languages} they recognize have been studied 
extensively~\cite{tataok}, 
and find applications in a variety of areas like natural language processing~\cite{jurmar08}, 
picture generation~\cite{drewes2006grammatical} and compiler construction~\cite{wilseihac13}.
In many cases, applications require both types of generalizations, and so several 
models of~\emph{weighted tree automata} (WTA) and the~\emph{regular tree series} they 
recognize continue to be studied~\cite{fulvog09}. 

The price to pay for the simplicity of tree automata lies in their significant limitations.
For instance, they cannot ensure that certain subtrees of input trees are equal~\cite{gecste15}, much 
like the classical (string) automata cannot ensure that the number of~$a$'s and~$b$'s in a word 
is equal. This defect was tackled with extensions proposed in~\cite{rateg1981} 
and~\cite{hom2012exp,godoy2013hom,godoy2010hom} where~\emph{tree automata with constraints} 
can explicitly 
require or forbid certain subtrees to be equal. Such devices have played a crucial part in deciding 
the~\emph{HOM-problem}: This long-standing open question~\cite{tataok}
asks, given a regular tree language and a tree homomorphism, whether the image is again regular. 
A tree homomorphism performs a transformation on trees and can duplicate subtrees, therefore the 
trees in the homomorphic image might have certain identical subtrees, which calls for 
the constraints mentioned above. In~\cite{godoy2013hom}, 
the authors first represent this homomorphic image of a regular tree language by a tree automaton 
with explicit constraints, and then decide algorithmically if the language it recognizes is regular 
despite the constraints it imposes.

The nature of the \emph{weighted} HOM-problem, where a regular tree series and a tree homomorphism 
are given as input, requires an individual investigation for different semirings. Recently, the approach 
from~\cite{godoy2013hom} was adjusted to the special case of nonnegative 
integers~\cite{integer-hom}, but so far, the question remains open for other semirings. In this 
paper, we reverse the strategy and impose conditions on the input in order to decide the thus 
restricted HOM-problem for a larger class of semirings. More precisely, we require our 
protagonist -- the weighted tree automaton with constraints -- to be unambiguous, and reduce the 
question of its regularity to the unweighted case from~\cite{godoy2013hom} for any zero-sum free 
(commutative) semiring. Afterwards, we phrase a condition on the input of the HOM-problem which 
ensures that our reduction is applicable.

This article consists of five sections including its introduction. Our main contributions can be 
summarized as follows:
\begin{itemize}
\item In Section~\ref{sec:prelim} we establish notations and recall the main objects that will play a 
role throughout the paper, primarily the~\emph{weighted tree automata with hom-constraints (WTAh)} 
which are used to represent homomorphic images of regular tree series. 
\item In Section~\ref{sec:reg decid} we prove that regularity is decidable for the unambiguous 
devices of this type over zero-sum free semirings. We achieve this by reducing the 
question to the unweighted case where regularity is known to be decidable~\cite{godoy2013hom}.
\item In Section~\ref{sec:sufficient} we integrate this decidability result into the HOM-problem. To 
this end, we phrase a condition on the input of the HOM-problem which guarantees that the WTAh
constructed for this instance is unambiguous. Thus, the HOM-problem with input restricted 
accordingly is decidable for any zero-sum free semiring.
\item Finally, in Section~\ref{sec:conclusion} we briefly summarize our results and 
discuss further research that will extend the present work.
\end{itemize}

\section{Preliminaries and Technical Background}\label{sec:prelim}
We begin as usual with the necessary background for this paper.
\subsection*{General Notation}
We denote the set~$\{0,1,2,\ldots\}$ of nonnegative integers by $\N$, and we
let $[k]=\{1,\ldots,k\}$ for every~$k\in\N$. Let~$A$ and $B$ be sets. 
We write~$\vert A\vert$ for the cardinality of $A$, and~$A^*$ for the set of 
finite strings over~$A$. The empty string is~$\varepsilon$ and the length of 
a string~$w$ is~$\vert w\vert$. For a mapping~$f\colon A\to B$ and
$S\subseteq B$ we denote the inverse image of~$S$ under~$f$ by~$f^{-1}
(S)$, and we write~$f^{-1}(b)$ instead of~$f^{-1}(\{b\})$ for every~$b\in B$. 

\subsection*{Trees}
A~\emph{ranked alphabet} is a pair~$(\Sigma,\rk)$ that consists of a finite set~$\Sigma$
and a rank mapping~$\rk\colon\Sigma\to\N$. For every~$k\geq 0$, we define~$\Sigma_k=
\rk^{-1}(k)$, and we sometimes write~$\sigma^{(k)}$ to indicate  that~$\sigma\in\Sigma_k$.
We often abbreviate~$(\Sigma,\rk)$ by~$\Sigma$ leaving~$\rk$ implicit. 
Let~$Z$ be a set disjoint with~$\Sigma$. The set of~\emph{$\Sigma$-trees over~$Z$},
denoted~$T_\Sigma(Z)$, is the smallest set~$T$ such that~(i) $\Sigma_0\cup Z\subseteq T$
and~(ii) $\sigma(\seq t1k)\in T$ for every~$k\in\N$, $\sigma\in\Sigma_k$ and~$\seq t1k
\in T$. We abbreviate~$T_\Sigma(\emptyset)$ simply to~$T_\Sigma$, and call any
subset~$L\subseteq T_\Sigma$ a \emph{tree language}.
Consider~$t\in T_\Sigma(Z)$. The set~$\pos(t)
\subseteq\N^*$ of~\emph{positions of~$t$} is defined inductively by~$\pos(t)=\varepsilon$ 
for every~$t\in \Sigma_0\cup Z$, and by $$\pos\big(\sigma(\seq t1k)\big)=\{\varepsilon\}
\cup\bigcup_{i\in [k]} \{ip\mid p\in\pos(t_i)\}$$ for all~$k\in\N$, $\sigma\in\Sigma_k$ 
and~$\seq t1k\in T_\Sigma(Z)$. The set of positions of~$t$ inherits the lexicographic 
order~$\leq_{\text{lex}}$ from~$\N^*$.
The~\emph{size~$\vert t\vert$} and \emph{height~$\he(t)$ 
of~$t$} are defined as \[\vert t\vert \;=\;\vert \pos(t)\vert \quad\text{ and }\quad 
\he(t)=\max_{p\in\pos(t)} \vert p\vert\;.\] 
For~$p\in\pos(t)$, the~\emph{label~$t(p)$ of~$t$ at~$p$}, the~\emph{subtree~$t|_p$ 
of~$t$ at~$p$} and the~\emph{substitution~$t[t']_p$ of~$t'$ into~$t$ 
at~$p$} are defined
\begin{itemize}
\item for~$t\in\Sigma_0\cup Z$ by $t(\varepsilon)=t|_\varepsilon=t$ and~$t[t']_\varepsilon=t'$, and
\item for $t=\sigma(\seq t1k)$ by $t(\varepsilon)=\sigma$, $\;t(ip')=
t_i(p')$, $\;t|_\varepsilon=t$, $\;t|_{ip'}=t_i|_{p'}$, $\;t[t']_\varepsilon=t'$, and $$t[t']_{ip'}
=\sigma(\seq t1{i-1},t_i[t']_{p'},\seq t{i+1}k)$$ for all~$k\in\N$, $\;\sigma\in\Sigma_k$, $\;\seq 
t1k\in T_\Sigma(Z)$, $\;i\in[k]$ and~$p'\in\pos(t_i)$.
\end{itemize}
For every subset~$S\subseteq\Sigma
\cup Z$, we let~$\pos_S(t)=\{p\in\pos(t)\mid t(p)\in S\}$ and we abbreviate~$\pos_{\{s\}}
(t)$ by~$\pos_s(t)$ for every~$s\in\Sigma\cup Z$.
Let~$X=\{x_1,x_2,\ldots\}$ be a fixed, countable set of formal variables. For $k\in\N$
we denote by~$X_k$ the subset $ \{\seq x1k\}$. For any~$t\in T_\Sigma(X)$ we let $$\var(t)=
\{x\in X\mid \pos_x(t)\neq\emptyset\}\;.$$ Finally, for~$t\in T_\Sigma(Z)$, a subset~$V\subseteq 
Z$ and a mapping~$\theta\colon V\to T_\Sigma(Z)$, we define the \emph{substitution~$t\theta$ 
applied to~$t$} by~$v\theta=\theta(v)$ for~$v\in V$, $\;z\theta =z$ 
for~$z\in Z\setminus V$, and $$\sigma(\seq t1k)\theta=\sigma\big(t_1\theta,\ldots,t_k
\theta\big)$$ for all~$k\in\N$, $\sigma\in\Sigma_k$ and~$\seq t1k\in T_\Sigma(Z)$. 
If~$V=\{\seq v1n\}$, we write the substitution~$\theta$ explicitly as~$[v_1 \leftarrow 
\theta(v_1),\ldots,v_n\leftarrow\theta(v_n)]$, and abbreviate it further to~$[\theta(x_1),
\ldots,\theta(x_n)]$ if~$V=X_n$.

\subsection*{Semirings and Tree Series}
A \emph{(commutative) semiring}~\cite{gol99,hebwei98} is a tuple~$(\SR, \mathord+, 
\,\cdot\:, 0, 1)$ such that~$(\SR, \mathord+, 0)$~and $(\SR,\, \cdot\:, 1)$ are
commutative monoids, $\cdot\:$~distributes over~$+$, and~$0 \cdot s = 0$ for all~$s\in 
\SR$.  Examples include 
\begin{itemize}
\item the Boolean semiring~$\B = \bigl(\{ 0,1 \}, \mathord\vee, 
\mathord\wedge,0, 1 \bigr)$,
\item the semiring~$\N = \bigl(\N , \mathord+,\mathord\cdot\:, 
0, 1\bigr)$,
\item the semiring~$\Z = \bigl(\Z , \mathord+,\mathord\cdot\:,
0, 1\bigr)$,
\item the tropical semiring $\mathbb T =\bigl(\N \cup\{\infty\}, \mathord{\min}, 
\mathord+, \infty, 0 \bigr)$, and
\item the arctic semiring $\mathbb A = \bigl(\N \cup
\{-\infty\}, \mathord{\max}, \mathord+, -\infty, 0 \bigr)$.
\end{itemize} 
When there is no risk of confusion, we refer to a semiring~$(\SR, 
\mathord+, \,\mathord\cdot\:, 0, 1)$ simply by its 
carrier set~$\SR$. We call~$\SR$~\emph{zero-sum free} if~$a+b=0$
implies~$a=b=0$ for all~$a,b\in\SR$.  All semirings listed above except for~$\Z$ are zero-sum free. 
Let~$\Sigma$ be a ranked alphabet and~$Z$ a set. Any mapping~$\varphi \colon T_\Sigma 
(Z)\to \SR$ is called a~\emph{tree series} or \emph{weighted tree language} over~$\SR$, and 
its~\emph{support} is the set~$\supp(\varphi) = \{t \in T_\Sigma(Z) \mid \varphi(t) \neq 0\}$. 

\subsection*{Tree Homomorphisms}
Given ranked alphabets $\Sigma$~and~$\Delta$, let~$h' \colon
\Sigma \to T_\Delta(X)$ be a mapping that satisfies $h'(\sigma) \in
T_\Delta(X_k)$ for all~$k \in \N$ and~$\sigma \in \Sigma_k$.  We
extend~$h'$ to~$h \colon T_\Sigma \to T_\Delta$ by $h(\alpha) =
h'(\alpha) \in T_\Delta(X_0) = T_\Delta$ for all~$\alpha \in \Sigma_0$
and $$h(\sigma(\seq s1k)) = h'(\sigma)[x_1 \gets h(s_1), \dotsc, x_k
\gets h(s_k)]$$ for all~$k \in \N$, $\sigma \in \Sigma_k$, and~$\seq
s1k \in T_\Sigma$.  The mapping~$h$ is called the \emph{tree
  homomorphism induced by~$h'$}, and we identify~$h'$ and its induced 
tree homomorphism~$h$.  We call~$h$ 
\begin{itemize}
\item \emph{nonerasing} if~$h(\sigma) \notin X$ for all~$\sigma\in\Sigma$, 
\item \emph{nondeleting} if~$\sigma \in \Sigma_k$ implies~$\var(h'(\sigma)) = X_k$ for all~$k\in\N$, 
and
\item \emph{input-finitary} if the preimage~$h^{-1}(t)$ is finite 
for every~$t \in T_\Delta$.
\end{itemize} 
If a tree homomorphism~$h \colon T_\Sigma \to T_\Delta$ is nonerasing and 
nondeleting, then for every~$s\in h^{-1}(t)$, it is~$\abs s \leq \abs t$. In particular, 
$h$~is then~input finitary. 

Consider a tree series~$A \colon \Tsigma \to \SR$. Its~\emph{homomorphic image 
under~$h$} is the tree series~$h_A \colon T_\Delta \to \SR$ defined for every~$t \in
T_\Delta$ by $$h_A(t) = \sum_{s \in h^{-1}(t)} A(s)\;.$$ This definition relies on the tree
homomorphism to be input-finitary, otherwise the above sum is 
not finite, so the value~$h_A(t)$ might not be well-defined. For this reason, we 
will only consider nondeleting and nonerasing tree homomorphisms.

\subsection*{Weighted Tree Automata with Constraints}
Recently it was shown~\cite{dlt22,WTAc-journal} that such homomorphic images of regular tree 
languages can be represented efficiently using~\emph{weighted tree automata with 
hom-constraints} 
(WTAh). These devices were first introduced for the unweighted case in~\cite{godoy2013hom} and defined for zero-sum free commutative semirings in~\cite{dlt22}.%
\begin{definition}[cf.~\protect{\cite[Definition~1]{WTAc-journal}}]
Let~$\SR$ be a commutative semiring. A \emph{weighted tree automaton 
over~$\SR$ with hom-constraints (WTAh)} is a tuple~$\cA=\big(Q,\Sigma,F,R,\wt\big)$ such 
that~$Q$ is a finite set of states, $\Sigma$~is a ranked alphabet, $F\subseteq Q$ is the set 
of final states, $R$ is a finite set of rules of the form~$(\ell,q,E)$ such that $\ell\in 
T_\Sigma(Q)\setminus Q$, $q\in Q$ and~$E$ is an equivalence relation on~$\pos_{Q}(\ell)$, 
and~$\wt\colon R\to \SR$ assigns a weight to each rule.
\end{definition}

Rules of a WTAh are typically depicted as~$r=\ell \stackrel{E}\longrightarrow_{\wt(r)} q$. The 
components of such a rule are the~\emph{left-hand side}~$\ell$, the~\emph{target 
state}~$q$, the set~$E$ of~\emph{hom-constraints} and the~\emph{weight}~$\wt(r)$.
A hom-constraint~$(p,p')\in E$ is listed as~``$\:p=p'\;$'', and if~$p$ and~$p'$ are distinct, 
then~$p,\,p'$ are called~\emph{constrained positions}. The equivalence class of~$p$ in~$E$ is 
denoted~$[p]_{\equiv_E}$. We typically omit the trivial constraints~$(p,p)\in E$.

\begin{example}\label{ex:first ex}
Let~$\Sigma$ be the ranked alphabet~$\{a^{(0)},g^{(1)},k^{(2)}\}$. Consider the WTAh~$\cA=
\big(Q,\Sigma,F,R,\wt\big)$ over~$\Z$ with~$Q=\{q,q_f\}$,  $F=\{q_f\}$ and the set of rules and 
weights
\begin{align*}
R\;=\; \big\{\; a\to_1 q\,,\quad g(q)\to_2 q\,,\quad k\big(q,g(q)\big)
\stackrel{1=21}\longrightarrow_1 q_f \;\}\,.
\end{align*}
The only constrained positions are~$1$ and~$21$ in the rule with left-hand 
side~$k\big(q,g(q)\big)$.
\end{example}

The WTAh is a~\emph{weighted tree grammar} (WTG) if~$E$ is the identity relation for every rule~$\ell
\stackrel{E}\longrightarrow q$, and a WTA in the classical sense~\cite{tataok} if additionally 
$\pos_\Sigma(\ell)=\{\varepsilon\}$. WTG and WTA are equally expressive, as WTG can be translated 
straightforwardly into WTA using additional states. 

In this work, we are particularly interested in a specific subclass of WTAh, namely 
the~\emph{eq-restricted} WTAh~\cite{WTAc-journal}. In such a device, there is a 
designated~\emph{sink-state} whose sole purpose is to neutrally process copies of identical subtrees. 
More precisely, whenever different subtrees are mutually constrained, there is one leading copy among 
them that can be processed with arbitrary states and weights, while every other copy is handled 
exclusively by the weight-neutral sink-state.

\begin{definition}
A WTAh~$\big(Q,\Sigma,F,R,\wt\big)$ is~\emph{eq-restricted} if it has a~\emph{sink 
state}~$\bot\in Q\setminus F$ such that
\begin{itemize}
\item for all~$\sigma\in\Sigma$, the rule~$\sigma(\bot,\ldots,\bot)\to_1 \bot$ belongs
to~$R$, and no other rule targets~$\bot$, and
\item for every rule~$\ell\stackrel{E}
\longrightarrow q$ with~$q\neq\bot$, the following conditions hold: \\
Let~$\pos_{Q}(\ell)=\{\seq p1n\}$ and~$q_i=\ell(p_i)$ for~$i\in[n]$.
\begin{enumerate}
\item For each~$i\in[n]$, there exists~$q'\in Q\setminus\{\bot\}$ 
with~$\{q_j\mid p_j\in [p_i]_{\equiv_E}\}\setminus \{\bot\}=\{q'\}$.
\item There exists exactly one~$ p_j\in [p_i]_{\equiv_E}$ such that~$q_j=q'$.
\end{enumerate}
\end{itemize}
\end{definition} 
In other words, among each $E$-equivalence class of positions of a left-hand 
side~$\ell$, there is only one occurrence of a state different from~$\bot$, every other 
related position is labelled by~$\bot$. Moreover, $\bot$ processes every possible tree with 
weight~$1$. Whenever we consider an eq-restricted WTAh, we denote its state set 
by~$Q\dot{\cup}\{\bot\}$ instead of~$Q\ni \bot$ to point out the sink-state. 

\begin{example}\label{ex:second ex}
Recall the WTAc~$\cA$ from Example~\ref{ex:first ex}. It is not eq-restricted since the constrained 
positions~$1$ and~$21$ are both labeled by the same state, which is not a sink state. Instead, let 
us add a non-final state~$\bot$ to~$Q$, replace the rule~$k\big(q,g(q)\big)\stackrel{1=21}
\longrightarrow_1 q_f$ with~$k\big(q,g(\bot)\big)\stackrel{1=21}\longrightarrow_1 q_f$ and add 
the required rules targeting~$\bot$ to obtain an eq-restricted WTAh~$\cA'$. More precisely, we 
have the eq-restricted WTAh~$\cA'=\big(\{q,q_f,\bot\},\Delta,\{q_f\},R',\wt'\big)$ with the set of 
rules and weights
\begin{align*}
R'= \;&\big\{\, a\to_1 q,\quad \;g(q)\to_2 q,\;\;\,\quad k\big(q,g(\bot)\big) \stackrel{1=21}
\longrightarrow_1 q_f\, \big\} \\ 
\cup \;&\{\,a\to_1\bot, \quad g(\bot)\to_1\bot ,
\quad k(\bot,\bot)\to_1\bot\,\}\; .
\end{align*} 
\end{example}

Next, let us recall the semantics of WTAh from~\protect{\cite[Definitions~2~and~3]{WTAc-journal}}.
\begin{definition}
Let~$\cA=\big(Q,\Sigma,F,R,\wt\big)$ be a WTAh. A~\emph{run of~$\cA$} is 
a tree over the ranked alphabet~$\Sigma\cup R$ where the rank of a rule is~$\rk(\ell\stackrel{E}
\longrightarrow q)=\vert \pos_{Q}(\ell) \vert$, and it is defined inductively. Consider~$\seq t1n\in 
T_\Sigma$, $\seq q1n\in Q$ and suppose that~$\varrho_i$ is a run of~$\cA$ for~$t_i$ 
to~$q_i$ with weight~$\wt(\varrho_i)=a_i$ for each~$i\in[n]$. 
Assume that there is a rule of the form~$\ell\stackrel{E}\longrightarrow_a q$ in~$R$ such 
that~$\ell=\sigma(\seq {\ell}1m)$, $\pos_{Q}(\ell)=\{\seq p1n\}$ with~$\ell(p_i)=q_i$ and that for 
all~$p_i=p_j\in E$, it is~$t_i=t_j$. Then the following is a run of~$\cA$ for the tree 
$t=\ell[t_1]_{p_1}\dotsm[t_n]_{p_n}$ to~$q$: 
$$\varrho=\big(\ell\stackrel{E}\longrightarrow_a q\big)
(\seq {\ell}1m)[\varrho_1]_{p_1}\dotsm[\varrho_n]_{p_n}\;.$$
Its~\emph{weight~$\wt(\varrho)$} is computed as~$a\cdot 
\prod_{i\in [n]} a_i$. If~$\wt(\varrho)\neq 0$, then~$\varrho$ is~\emph{valid}, and if in addition,
$q\in F$ for its \emph{target state}~$q$, then $\varrho$ is~\emph{accepting}. 
We call~$\cA$~\emph{unambiguous} if for every~$t\in 
T_\Sigma$ there is at most one accepting run. The value~$\wt^q(t)$ is the sum of all 
weights~$\wt(\varrho)$ of runs of~$\cA$ for~$t$ to~$q$. Finally, the
tree series recognized by~$\cA$ is defined simply by
 $$\sem{\cA}\colon \;T_\Sigma\to\SR,\quad t\mapsto\sum_{q\in F} \wt^q(t)\; .$$
\end{definition}

Since the weights of rules are 
multiplied, we can assume wlog.\ that~$\wt(r)\neq 0$ for all~$r\in R$, which we will do from now on. 
Finally, two WTAh are said to be~\emph{equivalent} if they recognize the same tree series.

\begin{example}
Recall the WTAh~$\cA$ and~$\cA'$ from Examples~\ref{ex:first ex}~and~\ref{ex:second ex}
and consider the tree~$k\big(g^2(a),g^3(a)\big)$. The accepting runs~$\varrho$ 
and~$\varrho'$ of~$\cA$ and~$\cA'$, respectively, for it are the following:
	 \begin{center}
		\begin{tikzpicture}
			\node at (0,2.085) (a) {$\varrho\colon$};
	\node at (3, 0)  (b) {
				\begin{forest}
					for tree={%
						l sep=0.1cm,
						s sep=0.6cm,
						minimum height=0.000008cm,
						minimum width=0.000015cm,
					}
	[$k\big(q\mathpunct{,} g(q)\big)\stackrel{1=21}\longrightarrow_1 q_f$[$g(q)\to_2 q$
	[$g(q)\to_2 q$[$a\to_1 q$] ] ]
		[$g$[$g(q)\to_2 q$[$g(q)\to_2 q$[$a\to_1 q$] ] ] ] ]
				\end{forest}
			};		
	\node at (8,2.085) (a) {$\varrho'\colon$};
	\node at (11, 0)  (b) {
				\begin{forest}
					for tree={%
						l sep=0.1cm,
						s sep=0.6cm,
						minimum height=0.000008cm,
						minimum width=0.000015cm,
					}
	[$k\big(q\mathpunct{,} g(\bot)\big)\stackrel{1=21}\longrightarrow_1 q_f$[$g(q)\to_2 q$
	[$g(q)\to_2 q$[$a\to_1 q$] ] ]
		[$g$[$g(\bot)\to_1 \bot$[$g(\bot)\to_1 \bot$[$a\to_1 \bot$] ] ] ] ]
				\end{forest}
			};			\node at (13, -2.32) (v) {.};
	\end{tikzpicture} 
\end{center}
It is~$\wt(\varrho)=2^4$ while~$\wt'(\varrho')=2^2$ because in the eq-restricted WTAh~$\cA'$,
every constrained subtree except for
one (pending from position~$1$) is processed exclusively in the state~$\bot$ with weight~$1$.

Both WTAh are unambiguous, so it is impossible for different accepting runs with complementary 
weights to cancel out. Thus for a tree~$t\in T_\Sigma$ it is~$t\in\supp\sem\cA$ iff.\ $\cA$ 
has an accepting run for~$t$, and the same is true for~$\cA'$. In fact, it is \[ \supp\sem\cA \;
\; =\;\;\supp\sem{\cA'}\;\;=\;\;\big\{ k\big(g^n{a},g^{n+1}(a)\big)\mid n\in\N \big\}\,. \]
 
\end{example}

If a tree series is recognized by a WTA, it is called~\emph{regular}, if it is recognized by some 
WTAh, then it is called~\emph{constraint-regular}, and if it is recognized by an eq-restricted 
WTAh, then it is called~\emph{hom-regular}. This choice of name hints at the fact that 
eq-restricted WTAh are tailored to represent homomorphic images of regular tree 
series. For an illustration of this feature, consider the following example. 

\begin{example}\label{ex:hom image}
Let~$\Sigma=\{a^{(0)},g^{(1)},f^{(1)}\}$ and~$A\colon T_\Sigma
\to\N$ defined for every~$s\in T_\Sigma$ by
\begin{align*}
A(s)=\begin{cases} 2^n &\text{if } s= f\big(g^n(a)\big) \\ 0 &\text{else.}
\end{cases}
\end{align*}

A simple WTA recognizing the tree series~$A$ is~$\cA=\big(\{q,q_f\},
\Sigma,\{q_f\},R,\wt\big)$ with the rules and weights~$R=\big\{ a\to_1 q,\; g(q)\to_2 q,\; f(q)
\to_1 q_f \big\}$. 
Consider~$\Delta=\{a^{(0)},g^{(1)},k^{(2)}\}$ and the input-finitary tree homomorphism~$h\colon 
T_\Sigma \to T_\Delta$ 
induced by the mapping~$h(a)=a,\; h(g)=g(x_1)$ and~$h(f)=k\big(x_1,g(x_1)\big)$. The 
homomorphic image~$h_A$ is the tree series given for all~$t\in T_\Delta$ by
\begin{align*}
h_A(t)=\begin{cases} 2^n &\text{if } t= k\big(g^n(a),g^{n+1}(a)\big) \\ 0 &\text{else.}
\end{cases}
\end{align*}
The natural eq-restricted WTAh that recognizes~$h_A$ 
is~$\cA'=\big(\{q,q_f,\bot\},\Delta,\{q_f\},R',\wt'\big)$ from Example~\ref{ex:second ex} 
with
\begin{align*}
R'= \;&\big\{\, a\to_1 q,\quad \;g(q)\to_2 q,\;\;\,\quad k\big(q,g(\bot)\big) \stackrel{1=21}
\longrightarrow_1 q_f\, \big\} \\ 
\cup \;&\{\,a\to_1\bot, \quad g(\bot)\to_1\bot ,
\quad k(\bot,\bot)\to_1\bot\,\}\; .
\end{align*} 
The new rules in~$R'$ are obtained from the rules in~$R$ by applying the tree homomorphism to their 
left-hand sides. The duplicated subtree below~$k$ targets the sink state~$\bot$ 
instead of~$q$ to avoid distorting 
the weight with an additional factor~$2^n$. 
\end{example}

More formally, the following statement was shown in~\cite{WTAc-journal}. We include a condensed 
version of the proof as we will refer to a technical detail below.
\begin{lemma}[cf.~\protect{\cite[Theorem~5]{WTAc-journal}}]\label{lm:WTAh for hom image}
Let~$\SR$ be a commutative semiring, $\cA=\big(Q,\Sigma,F,R,\wt\big)$ a 
WTA over~$\SR$ and~$h\colon T_\Sigma\to T_\Delta$ a 
nondeleting and nonerasing tree homomorphism. There is an eq-restricted WTAh~$\cA'$ that 
recognizes~$h_{\sem\cA}$.
\end{lemma}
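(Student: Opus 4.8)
The plan is to construct $\cA'$ explicitly by applying $h$ to the left-hand sides of the rules of $\cA$, exactly as in Example~\ref{ex:hom image}, and then to exhibit a weight-preserving bijection between runs of $\cA$ and valid runs of $\cA'$. Concretely, I would set $\cA' = (Q \dot{\cup} \{\bot\}, \Delta, F, R', \wt')$ with a fresh sink state $\bot$. For each rule $\sigma(q_1,\ldots,q_k) \to_a q$ of $\cA$ -- recall that a classical WTA has left-hand sides of precisely this shape -- put $\zeta = h'(\sigma) \in T_\Delta(X_k)$. Nondeletion gives $\pos_{x_i}(\zeta) \neq \emptyset$ for each $i$, and nonerasure gives $\zeta \notin X$, so $\zeta$ is not a bare variable and the resulting left-hand side will not be a single state. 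Choose a distinguished occurrence $p_i^* \in \pos_{x_i}(\zeta)$ (say the $\leq_{\text{lex}}$-least) for each $i \in [k]$, let $\ell'$ arise from $\zeta$ by labelling $p_i^*$ with $q_i$ and every other occurrence of $x_i$ with $\bot$, let $E$ have the sets $\pos_{x_i}(\zeta)$ as its classes, and set $\wt'(\ell' \stackrel{E}{\longrightarrow} q) = a$. Finally add the sink rules $\delta(\bot,\ldots,\bot) \to_1 \bot$ for all $\delta \in \Delta$. Checking the two eq-restricted conditions is then immediate: within each class $\pos_{x_i}(\zeta)$ the only non-$\bot$ label is $q_i \in Q$, and it occurs exactly once.

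The core is the claim that, for every $s \in \Tsigma$ and $q \in Q$, the runs of $\cA$ for $s$ to $q$ correspond bijectively and weight-preservingly to the valid runs of $\cA'$ for $h(s)$ to $q$; this I would prove by induction on $s$. For $s = \alpha \in \Sigma_0$ the rule $\alpha \to_a q$ matches the constraint-free rule $h'(\alpha) \to_a q$. For $s = \sigma(s_1,\ldots,s_k)$, a run of $\cA$ applies a rule $\sigma(q_1,\ldots,q_k) \to_a q$ with subruns $\varrho_i$ on $s_i$ to $q_i$; I translate each $\varrho_i$ by the induction hypothesis and place it at $p_i^*$, filling every duplicate position with the unique weight-$1$ sink run of $\bot$ on $h(s_i)$. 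Since $h(s) = \zeta[x_1 \gets h(s_1), \ldots, x_k \gets h(s_k)]$ puts the identical tree $h(s_i)$ at all positions of $\pos_{x_i}(\zeta)$, the constraints $E$ are satisfied, and because every sink run carries weight $1$ the total weight is $a \cdot \prod_{i} \wt(\varrho_i) = \wt(\varrho)$. Conversely, a valid $\cA'$-run for a tree $t$ to $q \neq \bot$ decomposes the same way: its root rule names a preimage symbol $\sigma$, the satisfied constraints force the subtrees at the positions of $\pos_{x_i}(\zeta)$ to agree, pinning down a unique $s$ with $h(s) = t$, and the distinguished subruns are $\cA$-runs by induction.

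Summing this bijection over all runs yields $\wt^q(t) = \sum_{s \in h^{-1}(t)} \wt^q(s)$ for each $q \in Q$ (the sum being finite since $h$, being nondeleting and nonerasing, is input-finitary). Summing over $q \in F$ and exchanging the two finite sums then gives
\[
  \sem{\cA'}(t) \;=\; \sum_{q \in F} \sum_{s \in h^{-1}(t)} \wt^q(s)
  \;=\; \sum_{s \in h^{-1}(t)} \sem{\cA}(s) \;=\; h_{\sem{\cA}}(t)\;,
\]
as required. I expect the reverse direction of the correspondence to be the delicate point: one must verify that an arbitrary valid run of $\cA'$ is forced into the shape produced by the construction -- in particular that the satisfied hom-constraints genuinely make the duplicated subtrees equal, so that the run really is the image of a single $\cA$-run on a well-defined preimage $s$ and no spurious runs inflate the sum -- and that the sink state, which processes every tree with weight $1$, contributes nowhere except at the intended duplicate positions.
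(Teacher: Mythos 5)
Your construction is the paper's construction with one step omitted, and that omission is where the gap lies. The paper first builds an intermediate automaton over the annotated alphabet $\Delta\cup\Delta\times R$, tagging the root of each $h(\sigma)$ with the originating rule $r$, and only then erases the annotation by a deterministic relabeling (invoking closure of hom-regular series under relabelings, which sums the weights of rules that become identical). You go directly to $\Delta$, and this makes your weight assignment ill-defined and your central bijection false whenever two distinct rules of $\cA$ collapse to the same rule of $\cA'$. Concretely, take $h(a)=h(b)=c$ (as in the paper's own Example~\ref{ex:tetris}) and rules $a\to_{a_1}q$ and $b\to_{a_2}q$ in $\cA$: both yield the single rule $c\to q$ in $R'$, so ``set $\wt'(\ell'\stackrel{E}{\longrightarrow}q)=a$'' does not determine a weight, and the correct choice is $a_1+a_2$. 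With that choice, the run of $\cA'$ for $c$ to $q$ is the image of runs of $\cA$ on \emph{two different} preimages, so your claim that a valid $\cA'$-run ``pins down a unique $s$ with $h(s)=t$'' is wrong, and your per-$s$ bijection cannot simply be summed over $s\in h^{-1}(t)$: as stated it would count the weight of each collapsed $\cA'$-run once per preimage rather than splitting it among them.

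The statement itself survives, but repairing the proof requires an extra idea: either reinstate the annotation so that distinct source rules stay distinct (making the run correspondence a genuine weight-preserving bijection) and then handle the erasure by a separate relabeling lemma, as the paper does; or define $\wt'$ of each rule of $R'$ as the \emph{sum} over all source rules mapping to it and then argue via distributivity that the product of these sums along a run of $\cA'$ expands into the sum of the weights of all $\cA$-runs over all preimages $s$. You flagged the reverse direction as the delicate point, which is the right instinct, but the resolution you sketch there assumes exactly the injectivity-on-symbols that the hypotheses do not grant.
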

\begin{proof}
An eq-restricted WTAh~$\cA'$ for~$h_{\sem\cA}$ is constructed in two stages.  
  
  First, we define~$\cA'' = \bigl(Q \dot{\cup} \{\bot\}, \Delta \cup \Delta
  \times R, F'', R'', \wt'' \bigr)$ such that for every~$r = \sigma(\seq q1k) \to_{\wt(r)} q$ 
  in~$R$ and $h(\sigma) = u =  \delta(\seq u1n)$, we include
  \begin{align*} r'' \;=\; \Bigl( \langle \delta,r\rangle(\seq u1n) \llbracket \seq
    q1k \rrbracket \stackrel{E}\longrightarrow_{\wt''(r'')} q \Bigr) \;\in\;
    R'' \qquad \text{with} \qquad E \;=\; \bigcup_{i \in [k]}
    \pos_{x_i}(u)^2\end{align*}
  where the substitution~$\langle \delta,r\rangle(\seq u1n)\llbracket
  \seq q1k \rrbracket$ replaces for every~$i \in [k]$ only the
  $\leq_{\text{lex}}$-minimal occurrence of~$x_i$ in~$\langle \delta,r\rangle(\seq u1n)$
  by~$q_i$ and all other occurrences by~$\bot$. 
  We set $\wt''(r'')= \wt(r)$.
  Additionally, we let~$r''_\delta = \delta(\bot, \dotsc, \bot) \to \bot \in R''$ with~$\wt''(r''_\delta) 
  = 1$ for every~$k \in \nat$ and $\delta \in \Delta_k$. 
  No other productions are in~$R''$.  Finally, we let~$F''  = F$.
  
  We can now delete the annotation: We use a deterministic relabeling to remove
  the second components of labels of~$\Delta \times R$, adding up the weights of now identical 
  rules. Since hom-regular languages are closed under relabelings \cite[Theorem~4]{WTAc-journal}, we obtain an
  eq-restricted WTAh~$\cA'=\big(Q\dot{\cup}\{\bot\},\Delta,  F',R',\wt'\big)$ recognizing~$h_{\sem\cA}$.
   \end{proof}

The WTAh constructed for the homomorphic image of a WTA preserves the original state
behaviour in its leading copies of duplicated subtrees. Using the notation from the proof of 
Lemma~\ref{lm:WTAh for hom image}, we want to define a mapping that traces the runs of the input 
WTA to its homomorphic image.

\begin{definition}\label{def:hR on runs}
   Let~$\cA,h$ and~$\cA'$ be as in Lemma~\ref{lm:WTAh for hom image}, let~$r=
   \sigma(\seq q1k) \to q\in R$ and~$h(\sigma) =
   \delta(\seq u1n)$. We let~$h^R(r)$ be the rule~$\delta(\seq u1n) \llbracket \seq
    q1k \rrbracket \stackrel{E}\longrightarrow q$ of the WTAh~$\cA'$. 
    
    The assignment~$h^R$ extends naturally to the runs of~$\cA$:
    For a run of the form~$\varrho=r=(\alpha\to q)$ with $\alpha\in\Sigma^0$, we 
    set~$h^R(\varrho)=h^R(r)$. For a run of~$\cA$ of the form~$\varrho=r(\seq {\varrho}1k)$ 
    with~$r=\sigma(\seq q1k)
    \to q$ and~$h(\sigma) = \delta(\seq u1n)$ we set $$h^R(\varrho)=\big(h^R(r)\big)(\seq u1n)
	\llbracket h^R(\varrho_1),\ldots, h^R(\varrho_k)\rrbracket\;;$$
	here, the substitution~$\llbracket h^R(\varrho_1),\ldots, h^R(\varrho_k)\rrbracket$ replaces 
	for every~$i \in [k]$ only the $\leq_{\text{lex}}$-minimal occurrence of~$x_i$ in~$\big(h^R(r)
	\big)(\seq u1n)$ by~$h^R(\varrho_i)$ and all other occurrences by the respective unique run 
	to~$\bot$ for the unique tree that satisfies the constraint~$E$. 
\end{definition}

Using the notation above, the assignment~$h^R\colon R\to R'$ is well-defined, but 
not necessarily injective, and its image is~$h^R(R)=\{r'\in R'\mid r'\text{ targets some }q\neq\bot\}$. 
Let us see how it acts on our running example.

\begin{example}
Recall the WTA~$\cA$ and WTAh~$\cA'$ from Example~\ref{ex:hom image}. The mapping~$h^R$
assigns $$h^R\;\colon \quad f(q)\to q_f \quad\mapsto \quad k\big(q,g(\bot)\big)\stackrel{1=21}
\longrightarrow q_f\; ,$$ and for the unique run of~$\cA$ for the tree~$f\big(g(a)\big)$, it is
	 \begin{center}
		\begin{tikzpicture}
	\node at (-2,0) (label) {$h^R\; \colon$};
	\node at (0, 0)  (a) {
				\begin{forest}
					for tree={%
						l sep=0.1cm,
						s sep=0.6cm,
						minimum height=0.000008cm,
						minimum width=0.000015cm,
					}
	[$f(q)\to q_f$[$g(q)\to q$[$a\to q$] ] ]
				\end{forest}
			};
	\node at (2,0) (label) {$\mapsto$};
	\node at (5.2, 0)  (b) {
				\begin{forest}
					for tree={%
						l sep=0.1cm,
						s sep=0.6cm,
						minimum height=0.000008cm,
						minimum width=0.000015cm,
					}
	[$k\big(q\mathpunct{,} g(\bot)\big)\stackrel{1=21}\longrightarrow q_f$[$g(q)\to q$[$a\to q$] ] 
		[$g$[$g(\bot)\to \bot$[$a\to \bot$] ]] ]
				\end{forest}
			};		
	\node at (7,-1.81) (label) {.};
	\end{tikzpicture} 
\end{center}
\end{example}

When discussing the behaviour of a WTAh~$\cA$, we often argue with the help of 
runs~$\varrho$, so it is a nuisance that we might have~$\wt(\varrho) = 0$.  This 
anomaly can occur even if~$\wt(r) \neq 0$ for all rules~$r$ of~$\cA$ due to the 
presence of zero-divizors, that is, elements~$s, s' \in \SR \setminus\{0\}$ such 
that~$s \cdot s' = 0$. Fortunately, we can avoid this altogether using a
construction of~\cite{kir11}, which is based on~\textsc{Dickson}'s 
Lemma~\cite{dic13}. It was first lifted to tree automata in~\cite{droheu15} and later to WTAh in~\cite{dlt22,WTAc-journal}. Here, we slightly 
adjust the proof of Lemma 3 in~\cite{WTAc-journal} such that it 
preserves the eq-restriction of the input WTAh. 

\begin{lemma}(cf.~\protect{\cite[Lemma~3]{WTAc-journal}})
\label{lm:zero-divizors}
Let~$\SR$ be a commutative semiring. For every eq-restricted WTAh~$\cA$ 
over~$\SR$ there exists an eq-restricted WTAh~$\cA'$ equivalent 
  to~$\cA$ such that~$\wt_{\cA'} (\varrho') \neq 0$ for all runs~$\varrho'$ of~$\cA'$.  
For each~$t\in  \supp\sem\cA$, the accepting (i.e.\ of non-zero weight 
  and targeting a final state) runs of~$\cA$ for~$t$ translate bijectively into the accepting 
  runs~$\varrho'$ of~$\cA'$ for~$t$, and the weights are preserved.
\end{lemma}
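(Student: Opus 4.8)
The plan is to adapt the construction of~\cite{kir11} via \textsc{Dickson}'s Lemma~\cite{dic13} so that it leaves the sink state untouched. Since we may assume $\wt(r)\neq 0$ for every rule~$r$, a run can acquire weight~$0$ only through zero-divisors. Collect the finite set $W=\{w_1,\dotsc,w_m\}$ of all rule weights of~$\cA$ that differ from~$1$. By commutativity of multiplication, the weight of a run~$\varrho$ depends only on the multiset of rules it uses, hence only on the exponent vector $\nu(\varrho)=(n_1,\dotsc,n_m)\in\N^m$, where~$n_i$ counts the rules of weight~$w_i$ in~$\varrho$; indeed $\wt(\varrho)=\prod_{i\in[m]} w_i^{n_i}$. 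Crucially, the rules targeting the sink state~$\bot$ all carry weight~$1$, so they never contribute to~$\nu(\varrho)$, and every run ending in~$\bot$ has weight~$1\neq 0$.

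First I would isolate the bad vectors $Z=\{\nu\in\N^m\mid \prod_{i\in[m]} w_i^{\nu_i}=0\}$. From $0\cdot s=0$ it follows that~$Z$ is upward closed: enlarging any coordinate keeps the product at~$0$. By \textsc{Dickson}'s Lemma the set of minimal elements of~$Z$ is finite, say with coordinatewise maxima $M_1,\dotsc,M_m$. Hence membership of a vector in~$Z$ is already decided by its \emph{capped} version in the finite set $P=\prod_{i\in[m]}\{0,\dotsc,M_i\}$, and capping is compatible with the coordinatewise addition used to combine subrun weights. This finiteness is exactly what lets us record the relevant part of the running product inside the states.

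Next I would build~$\cA'$ by augmenting only the non-sink states: using the convention that the eq-restricted state set is $Q\dot\cup\{\bot\}$, its states are~$\bot$ together with all pairs $(q,\nu)$ with $q\in Q$ and $\nu\in P\setminus Z$. The sink state and its rules $\delta(\bot,\dotsc,\bot)\to_1\bot$ are copied verbatim, and no new rule targets~$\bot$. For each rule $\ell\stackrel{E}\longrightarrow_{\wt(r)} q$ of~$\cA$ with $q\neq\bot$ I would install, for every admissible choice of capped vectors at its non-sink state positions, the corresponding rule whose target vector is the capped sum of those vectors together with the basis vector of~$\wt(r)$ (or the zero vector if $\wt(r)=1$) -- but only when this target vector lies outside~$Z$. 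Since the augmented left-hand side still carries~$\bot$ at every constrained copy and a single augmented state at each leading position, the two eq-restriction conditions are inherited verbatim; verifying this, and checking that capped addition faithfully tracks $Z$-membership, is where the real (though light) work sits. By construction $\nu\notin Z$ at every node, so every run of~$\cA'$ has nonzero weight.

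Finally I would read off the correspondence from the forgetful projection $(q,\nu)\mapsto q$, which turns a run of~$\cA'$ into a run of~$\cA$ using the same rule multiset and hence the same weight. A run~$\varrho$ of~$\cA$ with $\wt(\varrho)\neq 0$ has $\nu(\varrho)\notin Z$; by upward-closedness of~$Z$ every subrun vector is also outside~$Z$, so annotating~$\varrho$ bottom-up with its capped partial vectors survives the pruning and yields the unique preimage in~$\cA'$. This gives a weight-preserving bijection between the nonzero-weight -- in particular the accepting -- runs of~$\cA$ and the accepting runs of~$\cA'$ for each~$t$; summing over accepting runs and discarding the (zero-contributing) zero-weight runs of~$\cA$ then yields $\sem{\cA'}=\sem{\cA}$. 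The main obstacle is purely bookkeeping: confining the annotation to~$Q$ so that~$\bot$ stays a genuine sink and the eq-restriction is preserved.
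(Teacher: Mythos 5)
Your proposal is correct and follows essentially the same route as the paper's proof: both apply \textsc{Dickson}'s Lemma to the exponent vectors of the non-unit rule weights, cap those vectors into a finite box, annotate only the non-sink states with the capped vectors while leaving~$\bot$ and its rules untouched (thereby preserving the eq-restriction), and discard any rule whose target vector would evaluate to zero. The only differences are cosmetic (per-coordinate bounds~$M_i$ versus the paper's uniform bound~$u$, and your more explicit justification that capping faithfully tracks membership in the upward-closed bad set).
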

\begin{proof} Let $\cA$ be the eq-restricted WTAh~$\big(Q\dot{\cup}\{\bot\}, \Sigma, F, R, 
\wt\big)$.  Obviously, $(\mathbb S, \mathord\cdot, 1, 0)$~is a commutative monoid with zero.  
Let~$(\seq s1n)$ be an enumeration of the finite set~$\wt(R) 
  \setminus \{1\} \subseteq \mathbb S$.  We consider the monoid homomorphism~$h \colon 
  \nat^n \to \mathbb S$, which is given for every~$\seq m1n \in \nat$ by
  \[ h(\seq m1n) = \prod_{i = 1}^n s_i^{m_i}\;. \]
 According to \textsc{Dickson}'s lemma~\cite{dic13}, the
  set~$\min\big( h^{-1}(0)\big)$ is finite, where the partial order is the standard pointwise order 
  on~$\nat^n$.  Hence there is~$u \in\nat$ such that~$\min \big(h^{-1}(0)\big) \subseteq \{0, 
  \dotsc, u\}^n = U$. We define the operation~$\mathord{\oplus} \colon U^2 \to U$ by~$(v
  \oplus v')_i = \min(v_i + v'_i, u)$ for every~$v, v' \in U$ and~$i
  \in [n]$.  Moreover, for every~$i \in [n]$ we let~$1_{s_i} \in U$
  be the vector such that~$(1_{s_i})_i = 1$ and~$(1_{s_i})_a = 0$
  for all~$a \in [n] \setminus \{i\}$.  Let~$V = U \setminus
  h^{-1}(0)$.  We construct the equivalent eq-restricted WTAh~$\cA'=\big(Q'\dot{\cup}\{\bot\},
  \Sigma,F',R',\wt'\big)$ such that $Q' = Q \times V$, $F'= F\times V $, and $R'$~and~$\wt'$ are 
  given as follows.  Consider a rule $r = \ell \stackrel{E}\longrightarrow q \in R $, 
  let~$\pos_Q(\ell)=\{\seq p1k\}$ ordered lexicographically and let~$q_i=\ell(p_i)$ for all~$i\in 
  [k]$. Note that we do not consider the leaves of~$\ell$ that are labeled by~$\bot$. 
  For all choices of~$\seq v1k \in V$ such that the value $v = 1_{\wt(r)} \oplus \bigoplus_{i = 1}^k v_i $
  is again in $V$, the production \[ \ell[\langle q_1, v_1\rangle]_{p_1} \dotsc [\langle q_k, v_k\rangle]_{p_k}
   \stackrel{E}\longrightarrow  \langle q, v \rangle \] belongs to~$R'$ and its weight
  is~$\wt'(p') = \wt(r)$.  No further rules are in~$R'$.  

By annotating the power vectors~$v_i$ to the states~$q\neq \bot$, we suitably (for the purpose of
  zero-divisors) track the weight of runs as~$v$. If attaching another rule adopted from~$R$ to so 
  far valid runs of~$\cA'$ would evaluate the overall weight to zero, then we exclude this rule 
  from~$R'$. Consequently, every run of~$\cA'$ is valid. 
  To preserve the eq-restriction, we only annotate power vectors~$v_i$ to the non-sink states. It is safe 
  to omit~$\bot$ 
  in this construction since~$\bot$ only ever processes the neutral weight~$1$.
\end{proof}	      

From here on, we silently assume that each WTAh avoids zero-divizors.

A main result proved in this article is deciding regularity for unambiguous WTAh over zero-sum free 
commutative semirings. We achieve this by reducing the problem to the unweigted (i.e.\ boolean) case 
solved in~\cite{godoy2013hom}. For this, we must relate our WTAh model to the~\emph{tree automata 
with HOM equality constraints} used by~\cite{godoy2013hom} which differ slightly from our WTAh 
over the boolean semiring. Fortunately, the two are closely related and the translation is rather 
simple: We merely eliminate the sink state and drop the weight assignment.

\begin{lemma}\label{lm:booleanization}
Let~$\SR$ be a commutative semiring and~$\cA=\big(Q\dot\cup 
\{\bot\},\Sigma,F,R,\wt\big)$ an eq-restricted WTAh over~$\SR$. 
If~$\cA$ is unambiguous or~$\SR$ is zero-sum free, then there is a \emph{tree automaton with 
HOM equality constraints (TA$_{hom}$)}~\cite{godoy2013hom}~$\cA^\B$  
that recognizes the tree language~$\supp\sem\cA$. If~$\cA$ is a WTA (i.e.\ without constraints), 
then $\cA^\B$ is also a TA without constraints.
\end{lemma}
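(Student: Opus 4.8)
The plan is to separate a semantic reduction from a purely syntactic translation. First I would reduce the condition ``$t\in\supp\sem\cA$'' to the combinatorial condition ``$\cA$ has an accepting run for~$t$'', since this is exactly what an unweighted device can test. By the standing assumption justified in Lemma~\ref{lm:zero-divizors}, I may take~$\cA$ free of zero-divizors, so every run carries a nonzero weight and hence every run reaching a final state is accepting. Unfolding~$\sem\cA(t)=\sum_{q\in F}\wt^q(t)$ as a sum over all runs of~$\cA$ for~$t$ to a final state, the two hypotheses enter as follows. If~$\cA$ is unambiguous, this sum has at most one summand, so~$\sem\cA(t)=\wt(\varrho)\neq 0$ precisely when the unique accepting run~$\varrho$ exists. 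If instead~$\SR$ is zero-sum free, a sum of nonzero elements vanishes only when it is empty (iterating $a+b=0\Rightarrow a=b=0$), so again~$\sem\cA(t)\neq 0$ exactly when at least one accepting run exists. Either way I obtain $$\supp\sem\cA \;=\; \{\,t\in\Tsigma \mid \cA\text{ has an accepting run for }t\,\}\;,$$ a statement no longer mentioning the weights.

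Second, I would build~$\cA^\B$ by literally dropping the weights and eliminating the sink state. I keep the state set~$Q$ and the final states~$F$, and turn each rule~$\ell\stackrel{E}\longrightarrow q$ with~$q\neq\bot$ into a transition of~$\cA^\B$ by forgetting its weight and reinterpreting~$E$ as HOM equality constraints. In an eq-restricted WTAh each~$E$-class of~$\pos_Q(\ell)$ contains exactly one leading non-$\bot$ state~$q'$ and otherwise only~$\bot$; since~$\bot$ accepts \emph{every} subtree with weight~$1$ while~$E$ forces the~$\bot$-copies to equal the leading copy, the sole purpose of~$\bot$ is to absorb a duplicate that~$E$ has already pinned down. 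I therefore discard the~$\bot$-leaves and all rules targeting~$\bot$, letting the HOM equality constraints of~$\cA^\B$ demand equality between each leading position and its formerly-$\bot$ siblings. This is exactly the test a TA$_{hom}$ performs natively, so no sink state is required.

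Third, I would confirm that~$\cA^\B$ recognizes~$\supp\sem\cA$ by exhibiting, through induction on the input tree, a bijection between accepting runs of~$\cA$ and accepting (constraint-satisfying) runs of~$\cA^\B$. A run of~$\cA$ decomposes into a leading skeleton processed by non-$\bot$ states and the~$\bot$-subruns sitting at the constrained copies; the leading skeleton transfers verbatim, while each~$\bot$-subrun, which only witnesses that a copy is \emph{some} tree equal to its leading counterpart, is replaced by the corresponding equality check of~$E$. Conversely, a run of~$\cA^\B$ whose constraints hold lifts back: the equality constraints let me recopy the leading subtrees into the deleted positions and re-attach the unique weight-$1$ runs to~$\bot$. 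As~$F$ is unchanged the final-state condition is preserved, and combined with Step one this yields the claimed language identity. For the special case where~$\cA$ is a WTA there are no constrained positions and hence no~$E$, so the construction introduces no equality constraints and~$\cA^\B$ is an ordinary tree automaton.

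The main obstacle I expect lies in the matching of Steps two and three: the WTAh enforces copy-equality through the interplay of the sink state~$\bot$ and the equivalence relation~$E$, whereas a TA$_{hom}$ expresses it through HOM equality constraints anchored at positions of a left-hand side. Checking that deleting~$\bot$ loses no information and that the transferred constraints address exactly the right positions is where the care is needed, whereas the semantic reduction of Step one, where the hypotheses ``unambiguous'' and ``zero-sum free'' actually do their work, is conceptually central but technically brief.
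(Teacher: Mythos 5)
Your overall strategy coincides with the paper's: first use the standing absence of zero-divizors together with either unambiguity or zero-sum freeness to identify $\supp\sem\cA$ with the set of trees for which $\cA$ has a run to a final state, and then perform a purely syntactic translation that forgets the weights and eliminates~$\bot$. Your Step one is exactly the paper's closing sentence, spelled out correctly.

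The one point where your construction, as literally written, would fail is the elimination of~$\bot$. You propose to \emph{discard the $\bot$-leaves} of each left-hand side and let the constraints ``demand equality between each leading position and its formerly-$\bot$ siblings''. Once those leaves are deleted, however, the left-hand side~$\ell$ is no longer a well-formed tree over the ranked alphabet (its inner symbols lose children), it no longer matches the shape of the input trees the rule must process, and the relation~$E$ refers to positions that no longer exist --- your own Step three betrays this when it speaks of recopying subtrees ``into the deleted positions''. The paper instead \emph{relabels}: every $\bot$-occurrence in~$\ell$ is replaced by the unique non-$\bot$ state of its $E$-class, so $\ell$ keeps its shape and all $E$-related positions carry the same state, which is precisely what the definition of TA$_{hom}$ in~\cite{godoy2013hom} demands. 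With that correction your Step three goes through: the $\bot$-subruns are replaced by genuine subruns to the leading state on the identical subtree, which exist because the constraint forces equality. Note only that the resulting correspondence of runs need not be a bijection (a duplicated subtree may admit several runs to the leading state, all collapsing to the single weight-$1$ run to~$\bot$), but mutual existence of accepting runs is all the language identity requires. The constraint-free special case is handled as you describe.
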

\begin{proof}
Let~$q\in Q$ and consider a rule~$\ell\stackrel{E}\longrightarrow_a q$ of~$\cA$. Suppose 
that~$\{p_1^1,
\ldots,p_{n_1}^1\},\; \ldots \; ,\{p_1^m,\ldots, p_{n_m}^m\} $ are the equivalence classes of~$E$, and 
wlog.\ let~$p_1^i$ be the unique representative such that~$\ell(p_1^i)\neq\bot$ for each~$i
\in[m]$. Then we include the unweighted rule $$ \text{\large{$\ell$}}\,[\ell(p_1^1)]_{p^1_2}\cdots 
[\ell(p_1^1)]_{p^1_{n_1}} \;\cdots\; [\ell(p_1^m)]_{p^m_2}\cdots [\ell(p_1^m)]_{p^m_{n_m}}
\stackrel{E}\longrightarrow q $$ in~$R^\B$, that is, we replace every occurrence of~$\bot$ by the 
unique state from~$Q$ that labels a related position. This is necessary because the definition 
of~TA$_{hom}$ requires~$E$-related positions to be labelled with the same state. We 
proceed this way for every rule of~$\cA$, discarding the rules that target~$\bot$, and obtain the 
(unweighted)~TA$_{hom}$~$\cA^\B=\big(Q,\Sigma,F,R^\B\big)$. Since~$\cA$ avoids 
zero-divizors, the conditions in the 
statement are each sufficient to ensure that~$t\in\supp\sem\cA$ iff.\ there exists an run of~$\cA$
for~$t$ to a final state, so~$\cA^\B$ recognizes~$\supp\sem\cA$.
\end{proof}

\begin{example}
Reconsider the WTAh~$\cA'$ from~Example~\ref{ex:hom image}. To obtain the~TA$_{hom}$
$(\cA')^\B$, we remove the sink state~$\bot$, all rules that target~$\bot$ and the weight 
assignment, and replace the rule~$k\big(q,g(\bot)\big)\stackrel{1=21}\longrightarrow_1 q_f$ 
with the unweighted rule~$k\big(q,g(q)\big)\stackrel{1=21}\longrightarrow q_f$.
\end{example}
\section{Deciding Regularity for Unambiguous WTAh}\label{sec:reg decid}

In this section, we prove that regularity is decidable for unambiguous 
eq-restricted WTAh over zero-sum free semirings. To this end, we 
reduce this problem to regularity in the unweighted case, which was 
proved decidable in~\cite{godoy2013hom}.

We begin by defining the~\emph{linearization} of eq-restricted WTAh, which was introduced for the 
boolean case in~\cite{godoy2013hom} and adapted to the weighted model in~\cite{integer-hom}.
The linearization of a WTAh~$\cA$ by the number~$\mathtt{h}$ is a WTG~$\mathtt{lin}(\cA,
\mathtt{h})$ that approximates~$\cA$: It simulates all runs of~$\cA$ which only enforce the 
equality of subtrees of height at most~$\mathtt{h}$.  
This is achieved by instantiating the constrained $Q$-positions of every rule~$\ell \stackrel E
\longrightarrow q$ in~$\cA$ with compatible trees of height at most~$\mathtt{h}$, while the
$Q$-positions of~$\ell$ that are unconstrained by~$E$ remain unchanged. 

Formally, the linearization is defined following~\cite[Definition~7.1]{godoy2013hom}.

\begin{definition}[cf.~\protect{\cite[Definition~12]{integer-hom}}]
 Let~$\SR$ be a commutative semiring. Consider an eq-restricted WTAh~$\cA = (Q \dot\cup \{\bot
 \},  \Sigma, F, R,\wt)$ over~$\SR$, and let~$\mathtt{h}  \in\N$ be a nonnegative integer. 
 The~\emph{linearization  of~$\cA$ by~$\mathtt{h}$} is the WTG $\mathtt{lin}(\cA',\mathtt{h})=
 \big(Q, \Sigma, F,  R_{\mathtt{h}},\wt_{\mathtt{h}}\big)$, 
 where~$R_{\mathtt{h}}$~and~$\wt_{\mathtt{h}}$ are defined  as follows. 
 
 For~$\ell' \in T_\Sigma(Q\dot\cup\{\bot\}) $ and~$q \in Q$, we include the rule~$(\ell' \to q) $ in $R_{\mathtt{h}}$ iff.\ there exist a rule~$(\ell 
 \stackrel E\longrightarrow q) \in R$, positions~$\seq p1k \in \pos_{Q \dot\cup \{\bot\}}(\ell)$, 
 and trees~$\seq t1k \in T_\Sigma$ such that
  \begin{itemize}
  \item $\{\seq p1k\} = \bigcup_{p \in \pos_\bot(\ell)} [p]_E$, that is, $E$~constrains exactly the 
  positions~$\seq p1k$,
  \item $(p_i, p_j) \in E$ implies $t_i = t_j$ for all~$i,j \in[k]$,
  \item $\ell' = \ell[t_1]_{p_1} \dotsm [t_k]_{p_k}$, and
  \item $\wt^{\ell(p_i)}(t_i) \neq 0$ and~$\he(t_i) \leq\mathtt{h}$ for all~$i \in [k]$.
  \end{itemize}
  For every such production~$\ell' \to q$ we set~$\wt_{\mathtt{h}}(\ell' \to q)$ 
  as the sum over all weights
  \[ \wt(\ell \stackrel E\longrightarrow q) \cdot \prod_{i \in
      [k]} \wt^{\ell({p_i})}(t_i) \] for all~$(\ell \stackrel  E\longrightarrow q) \in R$, 
      $\;\seq p1k \in \pos_{Q \dot\cup\{\bot\}}(\ell)$ and~$\seq t1k \in T_\Sigma$ as above.
\end{definition}

Note that the linearization is a WTG without constraints, so it recognizes a regular tree series. 
Let us apply this construction to our running example.

\begin{example}
We recall the WTAh~$\cA'$ from Example~\ref{ex:hom image} and set~$\mathtt{h}=2$. 
The linearization of~$\cA'$ by~$2$ instantiates every constrained 
position by compatible trees of maximal height~$2$, keeping track of the weights, 
and removes~$\bot$ and the rules that target it. More precisely, $\mathtt{lin}(\cA',2)=\big(\{q,q_f\},
\Delta,\{q_f\},R_2,\wt_2\big)$ with the set of rules and weights 
\begin{align*} 
R_2\; =\; \big\{\, &a\to_1 q,\qquad g(q)\to_2 q, \qquad k\big(a,g(a)\big) \to_1 q_f,\\
	&k\big(g(a),g(g(a))\big) \to_2 q_f, \qquad k\big(g(g(a)),g(g(g(a)))\big) \to_4 q_f\;\big\}.
\end{align*}
\end{example}

This example illustrates that the larger we choose~$\mathtt{h}$, the better~$\mathtt{lin}(\cA',
\mathtt{h})$ approximates~$\sem{\cA'}$. In this particular case however, there will always be a 
tree~$t$ such that~$\sem{\cA'}(t)\neq\sem{\mathtt{lin}(\cA',\mathtt{h})}(t)$, say, the tree~$k\big(
g^{\mathtt{h}+1}(a),g^{\mathtt{h}+2}(a)\big)$. For eq-restricted WTAh~$\cA$ over~$\B$ 
or~$\N$ it is known~\cite{godoy2013hom,WTAc-journal} that~$\sem\cA$ is regular iff.\ 
$\sem{\mathtt{lin}(\cA,\mathtt{h})}=\sem\cA$ for a certain parameter~$\mathtt{h}$. For other
semirings, a customized investigation is necessary, but unambiguous WTAh allow us to decide 
regularity by applying the boolean case directly. To this end, the following lemma is fundamental.

\begin{lemma}\label{lm:lin is unambiguous}
Let~$\SR$ be a commutative semiring, $\cA$ an eq-restricted WTAh 
over~$\SR$ and~$h\in\N$. For each $t\in\supp\sem\cA$, there are 
at most as many accepting runs of~$\mathtt{lin}(\cA,\mathtt{h})$ for~$t$ 
as there are accepting 
runs of~$\cA$ for~$t$. In particular, if~$\cA$ is unambiguous, then 
so is its linearization, and
for every~$t\in\supp\sem\cA$
it is either~$\sem{\mathtt{lin}(\cA,\mathtt{h})}(t)=\sem\cA(t)$, or there are no accepting runs of~$
\mathtt{lin}(\cA,\mathtt{h})$ for~$t$.
\end{lemma}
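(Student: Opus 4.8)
The plan is to relate runs of $\mathtt{lin}(\cA,\mathtt h)$ to runs of $\cA$ through a \emph{collapse} operation. Every rule $\ell'\to q$ of the linearization arises from some rule $\ell\stackrel E\longrightarrow q$ of $\cA$ by substituting, at the constrained positions, concrete trees $t_i$ with $\he(t_i)\leq\mathtt h$; conversely, any run $\varrho$ of $\cA$ for $t$ in which \emph{every} constrained subtree has height at most $\mathtt h$ can be collapsed into a run $\varrho'$ of $\mathtt{lin}(\cA,\mathtt h)$ for $t$ by absorbing those constrained subtrees into the left-hand sides of the applied rules. I would first check that this collapse is well defined: validity of $\varrho$ guarantees that $E$-related subtrees are equal, the standing zero-divisor-freeness assumption guarantees $\wt^{\ell(p_i)}(t_i)\neq 0$, and the height bound is exactly what makes the resulting rule admissible in $R_{\mathtt h}$. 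The target state, and hence the accepting status, is preserved.

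The central step is a \emph{weight-decomposition} identity: for every run $\varrho'$ of $\mathtt{lin}(\cA,\mathtt h)$ for $t$,
\[ \wt_{\mathtt h}(\varrho') \;=\; \sum_{\varrho} \wt(\varrho)\;, \]
where $\varrho$ ranges over all runs of $\cA$ for $t$ with constrained subtrees of height at most $\mathtt h$ whose collapse equals $\varrho'$. This follows by structural induction on $\varrho'$: each factor $\wt_{\mathtt h}(\ell'\to q)$ is by definition a sum over the underlying $\cA$-rules and, through the terms $\wt^{\ell(p_i)}(t_i)=\sum_{\tau_i}\wt(\tau_i)$, a sum over runs of $\cA$ on the substituted subtrees; expanding the product of these sums by distributivity and commutativity yields exactly one summand per run $\varrho$ collapsing to $\varrho'$ (the $\bot$-copies contribute the neutral factor $1$ and may be ignored). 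Since an accepting run of the linearization has non-zero weight, the identity forces at least one accepting run of $\cA$ collapsing to it, so the collapse is \emph{surjective} onto the accepting runs of the linearization. Counting preimages then shows, for every $t$, that there are at most as many accepting runs of $\mathtt{lin}(\cA,\mathtt h)$ for $t$ as there are accepting runs of $\cA$ whose constrained subtrees all have height at most $\mathtt h$, and \emph{a fortiori} as accepting runs of $\cA$ in total; this is the first assertion, and since unambiguity of $\cA$ bounds the latter count by $1$, it transfers to $\mathtt{lin}(\cA,\mathtt h)$.

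For the final dichotomy I would fix $t\in\supp\sem\cA$ and use that unambiguity together with zero-divisor-freeness forces $\cA$ to have \emph{exactly one} accepting run $\varrho$, with $\sem\cA(t)=\wt(\varrho)\neq 0$. If all constrained subtrees of $\varrho$ have height at most $\mathtt h$, its collapse $\varrho'$ is the unique run of $\cA$ mapping to it, so the decomposition identity gives $\wt_{\mathtt h}(\varrho')=\wt(\varrho)=\sem\cA(t)$; as the linearization is unambiguous, $\varrho'$ is its only accepting run and $\sem{\mathtt{lin}(\cA,\mathtt h)}(t)=\sem\cA(t)$. Otherwise some constrained subtree of $\varrho$ exceeds height $\mathtt h$, and no accepting run of the linearization can exist, since by surjectivity it would have to be the collapse of $\varrho$, which is inadmissible. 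The main obstacle I anticipate lies in the bookkeeping of the weight-decomposition step: one must track carefully that a single linearized rule may aggregate several $\cA$-rules and several subtree-runs, and that over a semiring which need not be zero-sum free a collapse may have weight zero. The latter does not damage surjectivity onto the \emph{accepting} (non-zero-weight) runs, but it must be handled explicitly so that the counting inequality points in the intended direction.
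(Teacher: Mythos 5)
Your proof is correct and follows essentially the same route as the paper's (much terser) two-sentence argument: a collapse map sending each run of~$\cA$ with short constrained subtrees to a uniquely determined run of~$\mathtt{lin}(\cA,\mathtt h)$, with every accepting run of the linearization hit by some accepting run of~$\cA$; your weight-decomposition identity makes explicit what the paper leaves implicit and is indeed what is needed for the equality~$\sem{\mathtt{lin}(\cA,\mathtt h)}(t)=\sem\cA(t)$. One imprecision is worth fixing: the standing no-zero-divisor assumption guarantees~$\wt(\tau)\neq 0$ for each individual run~$\tau$, not~$\wt^{\ell(p_i)}(t_i)\neq 0$; over a semiring that is not zero-sum free the latter can vanish by cancellation among several runs of~$t_i$ to the (non-final) state~$\ell(p_i)$, in which case the collapsed rule is absent from~$R_{\mathtt h}$ and the collapse of a bounded-height run of~$\cA$ is not a run of the linearization at all. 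This does not break your argument, since surjectivity onto the accepting runs and the decomposition identity (where tuples with a vanishing factor contribute~$0$) survive, and in that situation the linearization simply has no accepting run for~$t$, landing in the second branch of the dichotomy; but the case split in your final paragraph should be on whether the collapse of the unique accepting run of~$\cA$ exists as a run of~$\mathtt{lin}(\cA,\mathtt h)$, not merely on the heights of its constrained subtrees.
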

\begin{proof}
The linearization~$\mathtt{lin}(\cA,\mathtt{h})$ is defined in such a way that it simulates every 
run~$\varrho$ of~$\cA$ with the following property: Say~$\varrho$ processes~$t\in T_\Sigma$, 
then for every rule~$\ell\stackrel{E}
\longrightarrow q$ used in~$\varrho$ at position~$p$ (that is, at the root of~$t|_p$), 
and for every position~$\bar{p}$ constrained by~$E$, it is~$\he(t|_{p\bar{p}})\leq \mathtt{h}$.
Different runs of~$\cA$ might be merged into the same run of~$\mathtt{lin}(\cA,\mathtt{h})$, but 
for a particular run of~$\cA$ it is uniquely determined which run of~$\mathtt{lin}(\cA,\mathtt{h})$ 
will incorporate it.
\end{proof}

We need yet another technical ingredient for the reduction to the unweighted 
case: to interchange the linearization of a WTAh and 
its projection onto the boolean TA$_{hom}$.
The linearization for TA$_{hom}$ was defined in~\cite[Definition~7]{godoy2013hom} and indeed, the 
following holds.
\begin{lemma}\label{lm:linB-Blin}
Consider an unambiguous, eq-restricted WTAh~$\cA$ over a commutative semiring. Let~$\cA^\B$ the 
TA$_{hom}$ for~$\supp\sem\cA$ 
defined in Lemma~\ref{lm:booleanization} and~\emph{linearize}$(\cA^\B,\mathtt{h})$ in turn the 
linearization of~$\cA^\B$ by~$\mathtt{h}$ as introduced in~\cite[Definition~7.1]{godoy2013hom}. 
Then it is~$\mathtt{lin}(\cA,\mathtt{h})^\B =\text{\emph{linearize}}(\cA^\B,\mathtt{h})$.
\end{lemma}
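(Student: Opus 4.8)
The plan is to prove the identity by comparing the two automata rule by rule. Both $\mathtt{lin}(\cA,\mathtt{h})^\B$ and $\text{linearize}(\cA^\B,\mathtt{h})$ are ordinary (unweighted, constraint-free) tree automata with state set~$Q$ and final states~$F$: the weighted linearization $\mathtt{lin}(\cA,\mathtt{h})$ already discards~$\bot$ and instantiates away every constraint, and its booleanization merely forgets the weights, while $\text{linearize}(\cA^\B,\mathtt{h})$ likewise resolves all constraints of the (already $\bot$-free) automaton~$\cA^\B$. Hence it suffices to show that the two have exactly the same set of rules~$\ell'\to q$. I would fix a rule $\ell\stackrel{E}\longrightarrow q\in R$ with $q\neq\bot$ of the original WTAh and trace the rules it induces along each of the two routes, then check that these two families coincide.

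The main structural observation driving the comparison concerns which positions of~$\ell$ get instantiated. By the eq-restriction, every $E$-equivalence class carries exactly one non-$\bot$ state~$q'$, the leading copy, and all its remaining members are labelled~$\bot$; in particular no class consists solely of~$\bot$-positions. Consequently $\bigcup_{p\in\pos_\bot(\ell)}[p]_E$ -- the positions the weighted linearization instantiates -- is exactly the union of the nontrivial $E$-classes, which is precisely the set of positions that the boolean linearization instantiates in $\cA^\B$, since passing to $\cA^\B$ leaves~$E$ unchanged and only relabels each~$\bot$ to its class's leading state~$q'$. Thus both routes substitute trees at the same positions, leave the remaining $Q$-positions untouched, and keep the same target~$q$; moreover $\ell$ and $\ell^\B$ agree off the constrained positions, so the resulting left-hand sides agree as soon as the substituted trees agree.

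It then remains to check that the admissible instantiations agree class by class. For a leading position (state~$q'$), the weighted linearization admits a tree~$t$ of height at most~$\mathtt{h}$ exactly when $\wt^{q'}(t)\neq 0$, whereas the boolean linearization admits~$t$ exactly when $\cA^\B$ has a run for~$t$ to~$q'$; these two conditions are equivalent by the same argument as in Lemma~\ref{lm:booleanization}, now applied to~$q'$ rather than to final states, which is legitimate since $\cA$ avoids zero-divizors and is unambiguous. For a $\bot$-position the weighted condition $\wt^\bot(t)\neq 0$ holds for every~$t$, but the equality constraints of~$E$ force~$t$ to equal the tree chosen at the leading position of its class, so the effective choice is again a tree accepted at~$q'$ -- exactly as in $\cA^\B$, where that position now carries~$q'$. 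Hence the two families of instantiations, and therefore the two rule sets, coincide.

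The step I expect to be the main obstacle is the weight bookkeeping underlying ``rule existence'': a rule $\ell'\to q$ survives in $\mathtt{lin}(\cA,\mathtt{h})^\B$ only if its linearized weight $\wt_{\mathtt{h}}(\ell'\to q)$ -- which is a sum over all base rules and instantiations producing~$\ell'$ -- is nonzero, and over a merely commutative semiring such a sum could in principle cancel, spuriously deleting a rule that the existence-based boolean construction keeps. This is exactly where unambiguity is indispensable: by Lemma~\ref{lm:lin is unambiguous} the linearization $\mathtt{lin}(\cA,\mathtt{h})$ inherits unambiguity from~$\cA$, so distinct valid instantiations cannot collapse into a single run, and together with zero-divizor-freeness (each individual product is already nonzero) this precludes cancellation. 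I would therefore make this no-cancellation guarantee the technical heart of the argument, ensuring that a rule is present on the weighted-then-boolean side precisely when a valid instantiation exists, and treat the position-and-tree matching above as the routine, if careful, bookkeeping.
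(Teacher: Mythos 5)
The paper itself states Lemma~\ref{lm:linB-Blin} without any proof, so there is no official argument to compare yours against; your proposal supplies what is evidently the intended rule-by-rule verification, and its skeleton is sound. The structural half is correct and complete: by the eq-restriction every $E$-class contains exactly one non-$\bot$ position, so $\bigcup_{p\in\pos_\bot(\ell)}[p]_E$ is exactly the set of constrained positions, both constructions instantiate precisely those positions and agree elsewhere, and the equality constraints force the non-leading copies to carry the same tree as the leading one.

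Two points deserve more care than you give them. First, the equivalence ``$\wt^{q'}(t)\neq 0$ iff $\cA^\B$ has a run for $t$ to $q'$'' is \emph{not} literally ``the same argument as in Lemma~\ref{lm:booleanization} applied to $q'$ rather than to final states'': unambiguity is defined only via \emph{accepting} runs, so it says nothing directly about the number of runs to a non-final state $q'$, and over a merely commutative (not zero-sum free) semiring several valid runs to $q'$ could cancel, making $\wt^{q'}(t)=0$ while $\cA^\B$ still has a run. You need the lifting argument: if $q'$ occurs at a constrained position of a useful rule, then two distinct valid runs of $\cA$ for $t$ to $q'$ extend (using zero-divisor avoidance to keep the weights nonzero) to two distinct accepting runs of some larger tree, contradicting unambiguity; hence for co-accessible $q'$ there is at most one run to $q'$ per tree and no cancellation can occur. (For non-co-accessible states the two automata may differ in useless rules, which is harmless for the application but should be acknowledged or excluded by trimming.) Second, your ``main obstacle'' is slightly misplaced: the definition of $\mathtt{lin}(\cA,\mathtt{h})$ includes a rule $\ell'\to q$ based on the \emph{existence} of an admissible instantiation, not on the summed weight $\wt_{\mathtt{h}}(\ell'\to q)$ being nonzero, so the rule sets are compared through the per-position conditions $\wt^{\ell(p_i)}(t_i)\neq 0$ rather than through $\wt_{\mathtt{h}}$. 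The no-cancellation argument you give for $\wt_{\mathtt{h}}$ is still worth recording (it is what keeps the standing assumption $\wt(r)\neq 0$ intact for the linearization), but the genuine technical heart is the first point above.
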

We are now ready for the main result of this section: the reduction of regularity for eq-restricted WTAh 
over zero-sum free semirings to the unweighted case.

\begin{theorem}\label{thm:regularity}
Let~$\SR$ be a zero-sum free commutative semiring and~$\cA$ an unambiguous 
eq-restricted WTAh over~$\SR$. The tree series~$\sem\cA$ is regular iff.\
$\supp\sem\cA$ is a regular tree language.
\end{theorem}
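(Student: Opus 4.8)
The plan is to prove the two implications separately. The forward direction is the easy one: if~$\sem\cA$ is regular, it is recognized by some WTA~$\cB$, which by our standing convention avoids zero-divizors. Since~$\SR$ is zero-sum free, $\sem\cB(t)=\sum_{q\in F}\wt^q(t)$ vanishes exactly when~$\cB$ has no accepting run for~$t$, so the support is captured by the underlying unweighted device. Formally I would apply the booleanization of Lemma~\ref{lm:booleanization} to~$\cB$ (in its WTA case, using zero-sum freeness) to obtain an unconstrained TA recognizing~$\supp\sem\cB=\supp\sem\cA$. Hence~$\supp\sem\cA$ is a regular tree language.

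For the backward direction, assume~$\supp\sem\cA$ is regular. I would first cross over to the boolean side: by Lemma~\ref{lm:booleanization} the TA$_{hom}$~$\cA^\B$ recognizes~$\supp\sem\cA$, which is regular by hypothesis. Now I invoke the regularity criterion for TA$_{hom}$ from~\cite{godoy2013hom}: since~$L(\cA^\B)$ is regular, there is a threshold~$\mathtt{h}\in\N$ for which the unweighted linearization satisfies $L(\text{\emph{linearize}}(\cA^\B,\mathtt{h}))=L(\cA^\B)=\supp\sem\cA$.

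Next I would transport this equality back into the weighted setting. By Lemma~\ref{lm:linB-Blin} we have $\mathtt{lin}(\cA,\mathtt{h})^\B=\text{\emph{linearize}}(\cA^\B,\mathtt{h})$, and by Lemma~\ref{lm:booleanization} this automaton recognizes~$\supp\sem{\mathtt{lin}(\cA,\mathtt{h})}$; combining the two yields $\supp\sem{\mathtt{lin}(\cA,\mathtt{h})}=\supp\sem\cA$. It then remains to lift this coincidence of supports to an equality of series, and this is where unambiguity is used via Lemma~\ref{lm:lin is unambiguous}: for every~$t\in\supp\sem\cA$ that lemma leaves only two options, namely $\sem{\mathtt{lin}(\cA,\mathtt{h})}(t)=\sem\cA(t)$ or~$\mathtt{lin}(\cA,\mathtt{h})$ has no accepting run for~$t$. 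The second option contradicts~$t\in\supp\sem{\mathtt{lin}(\cA,\mathtt{h})}$, so the weights agree on the support; off the support both series vanish since the supports coincide. Therefore $\sem{\mathtt{lin}(\cA,\mathtt{h})}=\sem\cA$, and as the linearization is a WTG without constraints, $\sem\cA$ is regular.

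I expect the main obstacle to be precisely the final step of the backward direction: arguing that mere equality of the boolean supports $\supp\sem{\mathtt{lin}(\cA,\mathtt{h})}=\supp\sem\cA$ already forces the full weighted equality. This is the point at which both hypotheses are genuinely indispensable. Zero-sum freeness is what makes booleanization faithful on supports in the first place, while unambiguity is what makes Lemma~\ref{lm:lin is unambiguous} dichotomous, leaving no room for a tree in the support to acquire a smaller or differently summed weight under linearization. Without either assumption the linearization could capture the support yet under-approximate the weight, and the argument would break. The surrounding steps are essentially bookkeeping that chains Lemmas~\ref{lm:booleanization},~\ref{lm:lin is unambiguous} and~\ref{lm:linB-Blin} with the threshold result of~\cite{godoy2013hom}.
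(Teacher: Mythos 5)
Your proof is correct and takes essentially the same approach as the paper: the forward direction via booleanization (Lemma~\ref{lm:booleanization}) of an equivalent WTA, and the backward direction by chaining Lemmas~\ref{lm:booleanization}, \ref{lm:lin is unambiguous} and~\ref{lm:linB-Blin} with the linearization threshold criterion for TA$_{hom}$ from~\cite{godoy2013hom}. The only difference is presentational: you argue the backward implication directly, whereas the paper phrases it contrapositively (assuming~$\sem\cA$ is not regular and concluding that no linearization of~$\cA^\B$ is equivalent to it), but the lemmas are invoked identically.
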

\begin{proof}
Suppose first that~$\sem\cA$ is regular, thus there is a WTA~$\cB$ equivalent to~$\cA$. 
Since~$\SR$ is zero-sum free, we can apply Lemma~\ref{lm:booleanization} to~$\cB$ and obtain 
that~$\supp\sem\cB=\supp\sem\cA$ is regular.

Next, suppose that~$\sem\cA$ is not regular. In particular, the regular WTG $\mathtt{lin}(\cA,
\mathtt{h})$ is not equivalent to~$\cA$ for any~$h\in \N$. Thus by 
Lemma~\ref{lm:lin is unambiguous}, it is~$\supp\sem\cA\neq \supp\sem{\mathtt{lin}(\cA,
\mathtt{h})}$. By Lemma~\ref{lm:booleanization}, $\mathtt{lin}(\cA,\mathtt{h})^\B$ recognizes the 
regular language~$\supp{\sem{\mathtt{lin}(\cA,\mathtt{h})}}$, and together with 
Lemma~\ref{lm:linB-Blin}, it is $$\sem{\cA^\B}=\supp\sem\cA\neq\sem{\mathtt{lin}(\cA,
\mathtt{h})^\B}=\sem{\text{\emph{linearize}}(\cA^\B,\mathtt{h})},$$ 
that is, the boolean linearization of the TA$_{hom}$~$\cA^\B$ is not equivalent 
to it for any~$h\in\N$. This, however, implies that~$\sem{\cA^\B}=\supp\sem\cA$ is not regular 
as proved in~\cite[Lemma~7.3]{WTAc-journal}.
\end{proof}
Note that we only used zero-sum freeness of the semiring for the first part of the statement, as 
Lemma~\ref{lm:booleanization} holds for unambiguous WTAh over arbitrary commutative semirings.
With this result, regularity of eq-restricted WTAh is decidable.
\begin{corollary}\label{cor:reg decid}
Let~$\SR$ be a zero-sum free commutative semiring. Given an unambiguous eq-restricted 
WTAh~$\cA$ over~$\SR$ as input, it is decidable whether~$\sem\cA$ is regular.
\end{corollary}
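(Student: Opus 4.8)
The plan is to decide regularity of $\sem\cA$ by reducing the question to the decidability of regularity for the boolean TA$_{hom}$~$\cA^\B$, which is guaranteed by the work of~\cite{godoy2013hom}. The key observation is that Theorem~\ref{thm:regularity} establishes the equivalence ``$\sem\cA$ is regular iff $\supp\sem\cA$ is a regular tree language'', and Lemma~\ref{lm:booleanization} provides, for our unambiguous eq-restricted WTAh~$\cA$ over a zero-sum free semiring, an explicitly constructible TA$_{hom}$~$\cA^\B$ recognizing exactly~$\supp\sem\cA$. Therefore, deciding regularity of~$\sem\cA$ amounts to deciding regularity of the tree language~$\supp\sem\cA = \sem{\cA^\B}$ recognized by~$\cA^\B$.

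First I would, on input~$\cA$, construct the TA$_{hom}$~$\cA^\B$ following the effective procedure in the proof of Lemma~\ref{lm:booleanization}: remove the sink state~$\bot$, discard all rules targeting~$\bot$, drop the weight assignment, and in each remaining rule replace every $\bot$-labelled constrained position by the unique non-sink state labelling a related position in its $E$-class. This construction is manifestly computable from the finite description of~$\cA$, and by Theorem~\ref{thm:regularity} together with Lemma~\ref{lm:booleanization} we have that~$\sem\cA$ is regular if and only if the tree language~$\sem{\cA^\B}$ recognized by~$\cA^\B$ is regular.

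Next I would invoke the decidability result for the unweighted HOM-problem: regularity of the language recognized by a TA$_{hom}$ is decidable by~\cite{godoy2013hom}. Running this decision procedure on~$\cA^\B$ thus decides whether~$\supp\sem\cA$ is a regular tree language, and by the chain of equivalences this answers whether~$\sem\cA$ is regular. Since both the construction of~$\cA^\B$ and the subsequent decision procedure terminate, the combined algorithm decides regularity of~$\sem\cA$.

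The crux of the argument is not in this corollary itself but in the machinery already assembled above: Theorem~\ref{thm:regularity} carries the burden of relating weighted regularity to boolean regularity via the linearization interchange (Lemma~\ref{lm:linB-Blin}) and the counting of accepting runs under unambiguity (Lemma~\ref{lm:lin is unambiguous}). Given those results, the only remaining obligation here is to observe that every step is \emph{effective}: that~$\cA^\B$ is computable from~$\cA$, and that the external decision procedure of~\cite{godoy2013hom} applies to~$\cA^\B$. Hence the main point to verify is effectiveness and termination of the reduction, which follows immediately because all constructions are syntactic transformations of finite objects.
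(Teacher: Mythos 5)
Your proposal is correct and follows exactly the paper's own argument: apply Theorem~\ref{thm:regularity} to reduce regularity of~$\sem\cA$ to regularity of~$\supp\sem\cA$, construct the TA$_{hom}$~$\cA^\B$ via Lemma~\ref{lm:booleanization}, and invoke the decision procedure of~\cite{godoy2013hom}. The added remarks on effectiveness are a reasonable elaboration but do not change the route.
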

\begin{proof}
By Theorem~\ref{thm:regularity}, $\sem\cA$ is regular iff.\ $\supp\sem\cA$ is regular. A 
TA$_{hom}$ recognizing the latter can be constructed with Lemma~\ref{lm:booleanization}, for 
which, in turn, regularity is decidable~\cite[Section~7]{godoy2013hom}.
\end{proof}

\section{A Sufficient Condition and the HOM-Problem}\label{sec:sufficient}
So far, the assumption we make for deciding regularity is imposed on the WTAh. Meanwhile the 
HOM-problem has a WTA~$\cA$ and a tree homomorphism~$h$ as input. In this 
section, we propose conditions on~$\cA$ and~$h$ which ensure that the strategy of the previous 
section is applicable to the corresponding instance of the HOM-problem. We begin with a condition 
for~$h$ which generalizes injectivity.

\begin{definition}
Let~$\Sigma$ and~$\Delta$ be ranked alphabets and~$h\colon T_\Sigma\to T_\Delta$ a 
nondeleting and nonerasing tree homomorphism. We call~$h$~\emph{tetris-free} if for all~$s,s'
\in T_\Sigma$ with~$h(s)=h(s')$, it is~$\pos(s)=\pos(s')$ and for all~$p\in\pos(s)$, we 
have~$h\big(s(p)\big)=h\big(s'(p)\big)$.
\end{definition}
\setcounter{figure}{0}    

In other words, $h\colon T_\Sigma\to T_\Delta$ is tetris-free if we cannot combine the building 
blocks~$h(\sigma),\; \sigma\in\Sigma$ in different ways to build the same tree. In contrast, 
Figure~1 below shows the well-known  \emph{Tetriminos\textsuperscript{\textregistered}}\textsuperscript{\textcopyright}~\cite{tetris} violating (and thus naming) the tetris-free condition.

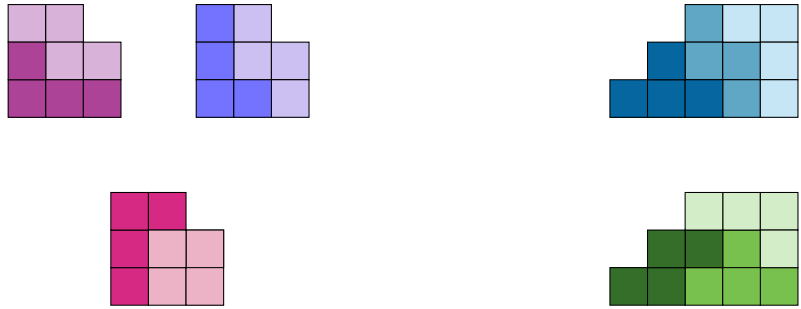
\begin{figure}\label{fig:tetris}
\begin{center}
\begin{tikzpicture}[scale=0.5]
  \begin{scope}[shift={(2.73, -8)}] 
  \begin{scope}[rotate=270]
  \draw[fill=white!20!purple!80!magenta] (-3, 0) rectangle (-4, 1);
  \draw[fill=white!20!purple!80!magenta] (-4, 0) rectangle (-5, 1);
  \draw[fill=white!20!purple!80!magenta] (-5, 0) rectangle (-6, 1);
  \draw[fill=white!20!purple!80!magenta] (-5, 1) rectangle (-6, 2);

  \draw[fill=purple!30!white] (-3, 1) rectangle (-4, 2);
  \draw[fill=purple!30!white] (-3, 2) rectangle (-5, 3);
  \draw[fill=purple!30!white] (-4, 2) rectangle (-5, 3);
  \draw[fill=purple!30!white] (-4, 1) rectangle (-5, 2);
  \end{scope} \end{scope}

  \begin{scope}[shift={(0, 3)}]
  \begin{scope}[rotate=270]
  \draw[fill=violet!30!white] (0, 0) rectangle (1, 1);
  \draw[fill=violet!30!white] (0, 1) rectangle (1, 2);
  \draw[fill=violet!30!white] (1, 1) rectangle (2, 2);
  \draw[fill=violet!30!white] (1, 2) rectangle (2, 3);

  \draw[fill=violet!65!pink] (2, 0) rectangle (3, 1);
  \draw[fill=violet!65!pink] (2, 1) rectangle (3, 2);
  \draw[fill=violet!65!pink] (2, 2) rectangle (3, 3);
  \draw[fill=violet!65!pink] (1, 0) rectangle (2, 1);
  \end{scope} \end{scope}

  \begin{scope}[shift={(5, 9)}] 
  \begin{scope}[rotate=270]
  \draw[fill=white!75!pink!80!blue] (6, 1) rectangle (7, 2);
  \draw[fill=white!75!pink!80!blue] (7, 1) rectangle (8, 2);
  \draw[fill=white!75!pink!80!blue] (7, 2) rectangle (8, 3);
  \draw[fill=white!75!pink!80!blue] (8, 2) rectangle (9, 3);

  \draw[fill=white!45!blue] (6, 0) rectangle (7, 1);
  \draw[fill=white!45!blue] (7, 0) rectangle (8, 1);
  \draw[fill=white!45!blue] (8, 0) rectangle (9, 1);
  \draw[fill=white!45!blue] (8, 1) rectangle (9, 2);
  \end{scope} \end{scope}
  
  \begin{scope}[shift={(10, 0)}] 
  \draw[fill=white!50!gray!55!cyan!70!teal] (8, 2) rectangle (9, 3);
  \draw[fill=white!50!gray!55!cyan!70!teal] (8, 1) rectangle (9, 2);
  \draw[fill=white!50!gray!55!cyan!70!teal] (9, 1) rectangle (10, 2);
  \draw[fill=white!50!gray!55!cyan!70!teal] (9, 0) rectangle (10, 1);

  \draw[fill=white!10!blue!25!teal] (6, 0) rectangle (7, 1);
  \draw[fill=white!10!blue!25!teal] (7, 0) rectangle (8, 1);
  \draw[fill=white!10!blue!25!teal] (8, 0) rectangle (9, 1);
  \draw[fill=white!10!blue!25!teal] (7, 1) rectangle (8, 2);
  
  \draw[fill=white!83!cyan!95!teal] (9, 2) rectangle (10, 3);
  \draw[fill=white!83!cyan!95!teal] (10, 2) rectangle (11, 3);
  \draw[fill=white!83!cyan!95!teal] (10, 1) rectangle (11, 2);
  \draw[fill=white!83!cyan!95!teal] (10, 0) rectangle (11, 1);
  \end{scope}
  
  \begin{scope}[shift={(10, -5)}] 
  \draw[fill=white!30!gray!38!teal!55!lime] (8, 0) rectangle (9, 1);
  \draw[fill=white!30!gray!38!teal!55!lime] (10, 0) rectangle (11, 1);
  \draw[fill=white!30!gray!38!teal!55!lime] (9, 1) rectangle (10, 2);
  \draw[fill=white!30!gray!38!teal!55!lime] (9, 0) rectangle (10, 1);

  \draw[fill=black!60!lime!68!teal] (6, 0) rectangle (7, 1);
  \draw[fill=black!60!lime!68!teal] (7, 0) rectangle (8, 1);
  \draw[fill=black!60!lime!68!teal] (8, 1) rectangle (9, 2);
  \draw[fill=black!60!lime!68!teal] (7, 1) rectangle (8, 2);
  
  \draw[fill=white!83!lime!86!teal] (9, 2) rectangle (10, 3);
  \draw[fill=white!83!lime!86!teal] (10, 2) rectangle (11, 3);
  \draw[fill=white!83!lime!86!teal] (10, 1) rectangle (11, 2);
  \draw[fill=white!83!lime!86!teal] (8, 2) rectangle (9, 3);
  \end{scope}

\end{tikzpicture}
\end{center}
\caption{The game of \emph{Tetris}\textsuperscript{\textregistered}\textsuperscript{\textcopyright}~\cite{tetris} being non-tetris-free by nature.}
\end{figure}

Let us discuss a short example and counter-example.

\begin{example}\label{ex:tetris}
Let~$\Sigma=\{a^{(0)},b^{(0)},g^{(1)}\}$ and~$\Delta=\{c^{(0)},k^{(2)}\}$. Consider the tree 
homomorphism~$h\colon T_\Sigma\to T_\Delta$ induced by~$h(a)=h(b)=c$ 
and~$h(g)=k(x_1,x_1)$. While~$h$ is not injective, it is tetris-free.
However, the tree homomorphism~$\hat{h}\colon T_\Sigma\to T_\Delta$ induced by~$\hat{h}(a)
=c$, $\hat{h}(b)=k(c,c)$ and~$\hat{h}(g)=k(x_1,c)$ is not: The trees~$g(a)$ and~$b$ violate the 
tetris-free condition.
\end{example}

Intuitively, if a tree homomorphism~$h$ is tetris-free, then any non-injective behaviour of~$h$ is 
located entirely 
at the symbol level. This allows the construction of the WTAh to cancel the non-injectivity of~$h$.
For this, however, we also need to make an assumption on the input WTA~$\cA$, which leads us 
to this augmented version of unambiguity for~$\cA$.

\begin{definition}
Let~$\cA$ be a WTA over a commutative semiring~$\SR$ and~$\Sigma$, and~$h\colon T_\Sigma
\to T_\Delta$ a nondeleting and nonerasing, tetris-free tree homomorphism. We say that~$\cA$ 
is~\emph{$h$-unambiguous} if for all trees~$s,s'\in T_\Sigma$ such that~$h(s)=h(s')$, all 
accepting runs~$\varrho,\varrho'$ of~$\cA$ for~$s$ and~$s'$, respectively, and all~$p
\in\pos(s)$, the target states of the rules applied in~$\varrho$ and~$\varrho'$ at~$p$, 
respectively, coincide.
\end{definition}

\begin{remark}\label{rm:tetris-free and h-unamb}
The condition of $h$-unambiguity is stronger than unambiguity: For~$s=s'\in\supp\sem\cA$ we 
obtain that~$\cA$ has at most one accepting run for~$s$ (since runs of WTA are uniquely 
determined by the processed symbol and the target state at every position). A similar reasoning applies 
if we choose~$s\neq s'$ with~$h(s)=h(s')$: While the unique runs of~$\cA$ for~$s$ and~$s'$ may
read different symbols, the states they pass through coincide at every position.
\end{remark}

Imposing these conditions on the input of the HOM-problem allows us to 
build on it with the arguments from the previous section. 

\begin{proposition}\label{prop:tetris and unamb}
Let~$\cA$ be a WTA over a commutative semiring~$\SR$ and~$\Sigma$, and~$h\colon T_\Sigma
\to T_\Delta$ a nondeleting and nonerasing tree homomorphism. If~$h$ is tetris-free and~$\cA$
is~$h$-unambiguous, then the eq-restricted WTAh~$\cA'$ for~$h_{\sem\cA}$ constructed in 
Lemma~\ref{lm:WTAh for hom image} is unambiguous.
\end{proposition}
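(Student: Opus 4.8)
The plan is to show directly that $\cA'$ has at most one accepting run on each $t\in T_\Delta$, by tracing any such run back to an accepting run of $\cA$ on a preimage of $t$ and then exploiting tetris-freeness together with $h$-unambiguity at that level.

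First I would establish a \emph{factorization}: every accepting run $\varrho'$ of $\cA'$ for $t$ has the form $h^R(\varrho)$ for some accepting run $\varrho$ of $\cA$ on a tree $s\in h^{-1}(t)$. This rests entirely on the construction in Lemma~\ref{lm:WTAh for hom image}. Since $\cA'$ is eq-restricted, $\varrho'$ splits into a skeleton of positions processed in states $\neq\bot$ and duplicate copies processed in $\bot$; the rules occurring in the skeleton target states $\neq\bot$ and hence lie in $h^R(R)$ by the image description following Definition~\ref{def:hR on runs}. Reading the skeleton bottom-up, the states recorded at each interface are consistent because $\varrho'$ is a valid run, so selecting for each skeleton rule a $h^R$-preimage rule of $\cA$ reassembles a genuine run $\varrho$ of $\cA$; the remaining $\bot$-parts of $\varrho'$ are forced, since the eq-restriction leaves exactly one $\bot$-run (of weight $1$) on each tree, so they coincide with the $\bot$-fillings produced by $h^R$, giving $h^R(\varrho)=\varrho'$. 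As $\varrho'$ targets a final state and we assume throughout that WTAh avoid zero-divisors (so every run has nonzero weight), $\varrho$ is accepting; moreover $h^R(\varrho)$ processes $h(s)$ by construction, whence $h(s)=t$.

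Now take two accepting runs $\varrho'_1,\varrho'_2$ of $\cA'$ for $t$ and factor them as $\varrho'_i=h^R(\varrho_i)$ with $\varrho_i$ an accepting run of $\cA$ for $s_i\in h^{-1}(t)$. Since $h(s_1)=t=h(s_2)$, tetris-freeness yields $\pos(s_1)=\pos(s_2)$ and $h(s_1(p))=h(s_2(p))$ for every position $p$; in particular $\varrho_1$ and $\varrho_2$ have the same shape. Because $\cA$ is $h$-unambiguous, the target states of the rules applied in $\varrho_1$ and $\varrho_2$ coincide at every position. Fix $p$ and let $r_1,r_2$ be the rules used there, with root symbols $\sigma_i=s_i(p)$. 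Then $r_1$ and $r_2$ have the same target state and, reading off the target states at the children positions $p1,\dotsc,pk$, the same left-hand-side states, while $h(\sigma_1)=h(\sigma_2)$. Since $h^R(r)$ depends only on $h(\sigma)$, the left-hand-side states, and the target state (Definition~\ref{def:hR on runs}), this forces $h^R(r_1)=h^R(r_2)$.

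It remains to assemble these local equalities into $h^R(\varrho_1)=h^R(\varrho_2)$, which I would do by induction on $t$: both runs place the same rule $h^R(r_1)=h^R(r_2)$ at the root, their leading copies at the minimal $x_i$-positions are $h^R$ of the respective child runs, target the same state, and process the same subtree of $t$ (hence are equal by the induction hypothesis), while all duplicate copies are the unique $\bot$-run on the shared subtree of $t$. Therefore $\varrho'_1=h^R(\varrho_1)=h^R(\varrho_2)=\varrho'_2$, and $\cA'$ is unambiguous. The main obstacle is the factorization step: one must argue carefully that the skeleton of an \emph{arbitrary} accepting run of $\cA'$ genuinely reassembles into a run of $\cA$, and that after the weight-merging relabeling in Lemma~\ref{lm:WTAh for hom image} every non-$\bot$ rule of $\cA'$ is still witnessed by some rule of $\cA$, so that $h$-unambiguity becomes applicable to the recovered runs $\varrho_i$.
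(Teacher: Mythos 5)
Your proof is correct and follows essentially the same route as the paper's: factor accepting runs of $\cA'$ through $h^R$ into accepting runs of $\cA$, then combine tetris-freeness (equal position sets and equal symbol images) with $h$-unambiguity (equal target states at every position) to conclude that the two $h^R$-images coincide. The only differences are presentational: the paper argues by contradiction and asserts the factorization step outright, whereas you prove the equality directly and additionally spell out the surjectivity of $h^R$ onto accepting runs of $\cA'$ and the inductive reassembly of the local rule equalities --- details the paper leaves implicit.
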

\begin{proof}
Let~$\vartheta$ and~$\vartheta'$ be accepting runs of~$\cA'$ for the same~$t\in T_\Delta$.
We prove the statement by contradiction, so assume that~$\vartheta\neq\vartheta'$. Then there 
are two distinct runs~$\varrho$ and~$\varrho'$ of~$\cA$ such that~$\vartheta=h^R(\varrho)$ 
and~$\vartheta'=h^R(\varrho')$ as introduced in Definition~\ref{def:hR on runs}. The 
mapping~$h^R$ does not modify the target states of runs, so both~$\varrho$ and~$\varrho'$ are 
accepting as well, and since~$\cA$ is unambiguous, they must process distinct trees~$s$ and~$s'$, 
respectively, with~$h(s)=h(s')$. By the premises of the statement, at every~$p\in \pos(s)=\pos(s')$ it 
is~$h\big(s(p)\big)=h\big(s'(p)\big)$, and the target states of~$\varrho$ and~$\varrho'$ at~$p$ 
coincide, so although~$\varrho\neq\varrho'$, after applying~$h$ it 
is~$\vartheta=h^R(\varrho)=h^R(\varrho')=\vartheta'$, which contradicts our assumption.
\end{proof}

We want to illustrate the role played by our two conditions, the $h$-unambiguity and the tetris-freeness. 
Let us discuss this with the help of two counter-examples.

\begin{example}
Consider the ranked alphabets~$\Sigma=\{a^{(0)},b^{(0)},g^{(1)}\}$ and~$\Delta=\{c^{(0)},k^{(2)}\}$.
Let~$h\colon T_\Sigma\to T_\Delta$ be the tetris-free tree homomorphism from 
Example~\ref{ex:tetris} induced 
by~$h(a)=h(b)=c$ and~$h(g)=k(x_1,x_1)$. Moreover, let~$\cA=\big(Q,\Sigma,
Q,R,\wt\big)$ be the WTA over the arctic~semiring~$\mathbb{A}$ with~$Q=\{q_a, q_b\}$ 
and the following rules and weights:
\begin{align*}
R\;=\;\big\{\; a\to_0 q_a\,,\quad b\to_0 q_b\,,\quad g(q_a)\to_1 q_a\,,\quad g(q_b)\to_2 q_b
\;\big\}\; .
\end{align*}
The WTA~$\cA$ is unambiguous, but not~$h$-unambiguous, since the runs for~$a$ and~$b$
target different states despite~$h(a)=h(b)$. Evaluating the weights in~$\mathbb{A}$, we 
obtain the tree series~$\sem\cA$ defined by 
\begin{align*}
\sem\cA\;\colon \;s\;\mapsto \;\begin{cases} n &\text{ if } s=g^n(a) \\ 2n &\text{ if } 
s=g^n(b) \end{cases}
\end{align*}

The WTAh~$\cA'=\big(Q\dot{\cup}\{\bot\},\Delta,Q,R',\wt'\big)$ recognizing~$h_{\sem\cA}$ which is 
obtained from Lemma~\ref{lm:WTAh for hom image} has the following rules and weights:
\begin{align*}
R\;&=\;\big\{\; c\to_0 q_a\,,\quad c\to_0 q_b\,,\quad k(q_a,\bot)\stackrel{1=2}\longrightarrow_1 
q_a\,,\quad k(q_b,\bot)\stackrel{1=2}\longrightarrow_2 q_b\;\big\} \\
&\,\cup\;\:\big\{\;c\to_0 \bot\,,\:\quad k(\bot,\bot)\to_0 \bot
\;\big\}\; .
\end{align*}
Because of the different target states, the rules~$c\to q_a$ and~$c\to q_b$ are not merged 
in~$\cA'$, therefore~$\cA'$ is not unambiguous.

On the other hand, let~$\hat{h}$ be the homomorphism from Example~\ref{ex:tetris} induced 
by~$\hat{h}(a)=c$, $\hat{h}(b)=k(c,c)$ and~$\hat{h}(g)=k(x_1,c)$. Recall that~$h$ is not tetris-free 
because~$h\big(g(a)\big)=h(b)$ although~$\pos\big(g(a)\big)\neq \pos(b)$. Moreover, 
consider the WTA~$\hat\cA=\big(\{q\},\Sigma,\{q\},\hat R,\hat\wt\big)$  over~$\N$ with the 
following rules and weights:
\begin{align*}
\hat R\;=\;\big\{\; a\to_2 q\,,\quad b\to_3 q\,,\quad g(q)\to_1 q \;\big\}\; .
\end{align*}

The WTA~$\hat{\cA}$ only has one state, so it is deterministic and thus unambiguous. It 
recognizes the tree series~$\sem{\hat\cA}$ defined by
\begin{align*}
\sem{\hat\cA}\;\colon \;s\;\mapsto \;2\abs{\pos_a(s)} + 3\abs{\pos_b(s)}\; .
\end{align*}
However, the WTAh~$\hat\cA\,'=\big(\{q,\bot\},\Delta,\{q\},\hat R',\hat\wt'\big)$ 
for~$h_{\sem\cA}$ has the following rules and weights:
\begin{align*}
\hat R'\;&=\;\big\{\; c\to_2 q\,,\quad \:\,k(c,c)\to_3 q\,,\quad k(q,c)\to_1 q\;\big\} \\
&\:\cup\;\,\big\{\;c\to_1 \bot\,,\quad k(\bot,\bot)\to_1 \bot
\;\big\}\,.
\end{align*}
Since~$\hat h$ performs no duplications, the rules targeting~$\bot$ are not used in any 
accepting run, so we can safely ignore them. 
Although this time, no two rules of~$\hat\cA\,'$ (that are used in an accepting run) share a 
left-hand side, the tree $k(c,c)= \hat h \big(g(a)\big)=\hat h(b)$ still has two different runs, which 
stem directly from the non-tetris-freeness of~$\hat h$.
\end{example}
As a concequence of Proposition~\ref{prop:tetris and unamb}, our restricted version 
of the HOM-problem is decidable.

\begin{corollary}\label{cor:hom decid}
Let~$\SR$ be a zero-sum free, commutative semiring. For a nondeleting and nonerasing, 
tetris-free tree homomorphism~$h$ and an~$h$-unambiguous WTA~$\cA$ over~$\SR$ as input, it 
is decidable whether the tree series~$h_{\sem\cA}$ is regular.
\end{corollary}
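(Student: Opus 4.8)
The plan is to chain the three preceding results into a decision procedure, the guiding observation being that every step involved is effective. Given the WTA~$\cA$ and the tree homomorphism~$h$, I would first apply the construction from Lemma~\ref{lm:WTAh for hom image} to obtain an eq-restricted WTAh~$\cA'$ recognizing~$h_{\sem\cA}$. This construction is manifestly computable: for each rule of~$\cA$ it applies~$h$ to the left-hand side, places the original states at the~$\leq_{\text{lex}}$-minimal variable occurrences and the sink-state~$\bot$ everywhere else, adds the rules targeting~$\bot$, and finally removes the annotation by a deterministic relabeling. Thus~$\cA'$ is obtained from the input in finite time.

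Next, I would invoke Proposition~\ref{prop:tetris and unamb}. Because the input promises that~$h$ is tetris-free and that~$\cA$ is~$h$-unambiguous, that proposition guarantees that the constructed~$\cA'$ is unambiguous; crucially, we need not test these two hypotheses ourselves, as they are part of the given instance. Then, since~$\SR$ is zero-sum free and~$\cA'$ is an unambiguous eq-restricted WTAh, Corollary~\ref{cor:reg decid} is applicable and decides whether~$\sem{\cA'}$ is regular. As~$\sem{\cA'}=h_{\sem\cA}$ by construction, this is exactly a decision of the regularity of~$h_{\sem\cA}$, which completes the procedure.

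Insofar as there is a main obstacle, it is conceptual bookkeeping rather than genuine difficulty: the substantive work has already been carried out in Proposition~\ref{prop:tetris and unamb}, which transfers unambiguity to the homomorphic image, and in Corollary~\ref{cor:reg decid}, which ultimately reduces to the decidability of the unweighted HOM-problem~\cite{godoy2013hom}. The only points deserving genuine care are confirming that the construction of~$\cA'$ is effective and that the identity~$\sem{\cA'}=h_{\sem\cA}$ is precisely the one furnished by Lemma~\ref{lm:WTAh for hom image}, so that the unambiguity supplied by Proposition~\ref{prop:tetris and unamb} really refers to the same automaton whose regularity Corollary~\ref{cor:reg decid} then decides.
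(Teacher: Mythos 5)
Your proposal is correct and follows exactly the chain the paper intends: build the eq-restricted WTAh via Lemma~\ref{lm:WTAh for hom image}, obtain its unambiguity from Proposition~\ref{prop:tetris and unamb} using the tetris-freeness of~$h$ and the~$h$-unambiguity of~$\cA$, and then decide regularity with Corollary~\ref{cor:reg decid}. The paper leaves this proof implicit (stating the corollary as a consequence of the proposition), and your write-up simply makes the same argument explicit.
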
 

\section{Conclusion and Future Work}\label{sec:conclusion}
 
Homomorphic images of regular tree series can be represented using an extension of WTA, the 
so-called \emph{eq-restricted WTAh}~\cite{WTAc-journal}. In this paper, we have shown 
that regularity is decidable for unambiguous devices of this type over zero-sum free commutative 
semirings. For this, we reduced this question to the unweighted setting, where regularity is 
known to be decidable~\cite{godoy2013hom}. Moreover, we have phrased a condition on the input 
WTA~$\cA$ and tree homomorphism~$h$ that ensures unambiguity of the WTAh representing the 
image~$h_{\sem\cA}$. Thus the~\emph{HOM-problem} over zero-sum free semirings which, 
given~$\cA$ and~$h$ as input, asks whether~$h_{\sem\cA}$ is regular, is decidable if the input 
satisfies our condition.

Notably, the zero-sum freeness of the semiring is only used in Theorem~\ref{thm:regularity} to 
show that if the tree series recognized by an unambiguous eq-restricted WTAh~$\cA$ is regular, 
then its support is also regular. It is plausible that the zero-sum freeness is not needed: Its 
purpose is to ensure that different accepting runs of~$\cA$ for the same tree~$t$ cannot cancel out, 
leaving~$t\notin\supp{\sem\cA}$ despite~$\cA$ having accepting runs for~$t$. This, however, 
should not be a concern if~$\cA$ is unambiguous. To discard the zero-sum freeness assumption, it 
suffices to prove this simple statement: \emph{If~$\cA$ is an unambiguous eq-restricted WTAh 
and~$\sem\cA$ is regular, then there is an unambiguous WTA equivalent to~$\cA$.} In fact, the 
linearization 
of~$\cA$, which is unambiguous by Lemma~\ref{lm:lin is unambiguous}, is a promising candidate. 
Thus we conjecture that Theorem~\ref{thm:regularity} holds for arbitrary commutative semirings, as 
do then Corollaries~\ref{cor:reg decid} -- stating that regularity is decidable for unambiguous 
eq-restricted WTAh -- and~\ref{cor:hom decid} -- stating that the HOM-problem is decidable under 
our assumptions on~$\cA$ and~$h$.

Recently, the disambiguation of weighted (string) automata from~\cite{mohri2017} was lifted to 
trees~\cite{stierulbricht}. Here, the authors assume a variation of the~\emph{twins property} 
which restricts the behaviour of related states of a WTA. This allows them to construct an 
equivalent unambiguous WTA. A natural question is whether this proof can be adjusted to provide 
even an $h$-unambiguous WTA, say, by refining the twins property with respect to~$h$. That 
way, we could lift our result to a larger class of input WTA.
\bibliographystyle{eptcs}
\bibliography{references.bib}
\end{document}